\newcommand{\act}{Act}
\newcommand{\acttau}{\act_{\tau}}
\newcommand{\wact}{\act^*}
\newcommand{\eps}{\varepsilon}
\newcommand{\trans}[1]{\mathrel{\smash{\xrightarrow{#1}}}} 
\newcommand{\wtrans}[1]{\mathrel{\smash{\xRightarrow{#1}}}} 
\newcommand{\wvt}[1]{\wtrans{\vv{#1}}} 
\newcommand{\wepvt}{\wtrans{\eps}} 
\newcommand{\wwvt}{\wvt{w}} 
\newcommand{\wvvt}{\wvt{v}} 
\newcommand{\wuvt}{\wvt{u}} 
\newcommand{\wtwt}{\wtrans{}} 
\newcommand{\whtrans}[1]{\wtrans{\hat{#1}}} 
\newcommand{\whal}{\wtrans{\hat{\alpha}}} 
\newcommand{\weakStepDelay}[1]{\mathrel{\xlongequal{#1}\hspace{-3.8pt}\triangleright}}%
\newcommand{\delay}[1]{\weakStepDelay{#1}}
\theoremstyle{definition}
\newtheorem{defi}{Definition}
\newtheorem{example}{Example}
\theoremstyle{plain}
\newtheorem{theorem}{Theorem}
\newtheorem{lemma}[theorem]{Lemma}
\newcommand{\vw}{\vv{w}}
\newcommand{\vu}{\vv{u}}
\newcommand{\vecv}{\vv{v}}
\newcommand{\GM}{\mathcal{G}_{C}}
\newcommand{\gameMove}{\mathrel{\rightarrowtail}}%
\newcommand{\game}{\mathcal{G}}%
\newcommand{\attackerPos}[1]{{(#1)}_\mathtt{a}}
\newcommand{\defenderPos}[1]{{(#1)}_\mathtt{d}}
\newcommand{\CDefSimPos}[1]{\code{Sim}\defenderPos{#1}}
\newcommand{\CDefSwapPos}[1]{\code{Swap}\defenderPos{#1}}
\newcommand{\playsConsF}[2]{G^{\infty}_{#1}[#2]}
\newcommand{\playsConsFInit}[1]{\playsConsF{#1}{\attackerPos{p_0, \set{q_0}}}}
\newcommand{\strat}{f_C}
\newcommand{\pre}[1]{\preceq_{\mathit{#1}}} 
\newcommand{\eq}[1]{\sim_{\mathit{#1}}}   
\newcommand{\rel}[1]{\mathcal{#1}}
\newcommand{\ccsLaw}[1]{\mathbf{#1}}
\newcommand{\ccsStop}{\boldsymbol{0}}%
\newcommand{\ccsPrefix}{\ldotp\!}%
\newcommand{\ccsChoice}{+}%
\newcommand{\ccsDef}{\mathrel{\stackrel{\mathrm{def}}{=}}}%
\newcommand{\ccsHide}{\mathrel{\backslash}}%
\newcommand{\ccsPar}{\mathrel{\mid}}%
\newcommand{\ccs}{\mathsf{CCS}}%
\newcommand{\ccsIdentifier}[1]{\mathsf{#1}}%
\newcommand{\ccsOutm}[1]{\overline{#1}}%
\newcommand{\action}[1]{\mathsf{#1}}%
\newcommand{\hmlObs}[1]{\langle#1\rangle}
\newcommand{\set}[1]{\{#1\}}
\providecommand*{\code}[1]{\texttt{#1}}
\newcommand{\isb}{$\vcenter{\hbox{\includegraphics[height=\fontcharht\font`\/]{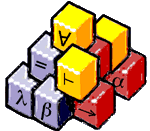}}}$}
\newcommand{\isbref}[3]{\footnote{\isb\,#1 \href{https://concurrency-theory.org/contrasimulation-game/Contrasimulation/\detokenize{#2}.html\##3}{\nolinkurl{#2.#3}}}}
\DeclareRobustCommand{\rvdots}{%
  \vbox{
    \baselineskip4\p@\lineskiplimit\z@
    \kern-\p@
    \hbox{.}\hbox{.}\hbox{.}
  }}
\newcommand{\dneg}{\mathord{\vphantom{\neg}\mathpalette\dneg@\relax}}
\newcommand{\dneg@}[2]{%
  \ooalign{%
    $\m@th#1\neg$\cr
    \raisebox{0.8\height}{\clipbox{0pt {0.6\height} 0pt 0pt}{$\m@th#1\neg$}}\cr
  }%
}
\def\@endtheorem{\endtrivlist}
\def\orcidID#1{\href{http://orcid.org/#1}{\protect\raisebox{-1.25pt}{\protect\includegraphics{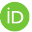}}}}
\title{A Game Characterization for Contrasimilarity}
\author{
Benjamin Bisping\orcidID{0000-0002-0637-0171} \qquad Luisa Montanari\orcidID{0000-0002-5270-0290}
\institute{Technische Universtiät Berlin\\ Berlin, Germany}
\email{benjamin.bisping@tu-berlin.de, luisa.montanari@campus.tu-berlin.de}
}
\begin{document}

\maketitle

\begin{abstract}
We present the first game characterization of contrasimilarity, the weakest form of bisimilarity. The game is finite for finite-state processes and can thus be used for contrasimulation equivalence checking, of which no tool has been capable to date. A machine-checked Isabelle/HOL formalization backs our work and enables further use of contrasimilarity in verification contexts.
\end{abstract}

\section{Introduction}

Contrasimilarity is the weakest equivalence for systems with internal $\tau$-transitions in Rob van Glabbeek's linear-time--branching-time spectrum~\cite{van1993linear} that instantiates to bisimilarity if there is no internal behavior. It differs from the more commonly used weak bisimilarity in granting the additional equalities\linebreak[3] $\ccsLaw{CS}: \tau \ccsPrefix ( \tau \ccsPrefix X \ccsChoice Y) = \tau \ccsPrefix X \ccsChoice Y$, which it shares with coupled similarity, and $\ccsLaw{C}: a \ccsPrefix (\tau \ccsPrefix X \ccsChoice \tau \ccsPrefix Y) = a \ccsPrefix X \ccsChoice a \ccsPrefix Y$. Contrasimilarity is about ``as weak as one can get'' with respect to $\tau$-steps without losing the distinguishing powers of strong bisimilarity with respect to visible behavior.

Although this position ``on the edge'' makes contrasimilarity (and the contrasimulation preorder) particularly interesting, there has been only little research into it. It has been investigated as an equivalence notion justifying parallelizing transformations in compilers by Bell~\cite{bell_certifiably_2013}, as a strong way of relating context-free processes to their encodings as push-down automata by Baeten et al.~\cite{bctbc2008cfpPda, deHoop2017cfpPdp}, and as the limit of arbitrarily-nested statements about impossible futures by Voorhoeve and Mauw~\cite{voorhoeve_impossible_2001}.

In this paper, we give the first game characterization of the contrasimulation preorder and prove its correctness. All the stronger (branching-time) notions of the linear-time--branching-time spectrum with internal steps already have game characterizations~\cite{escrig_games_2017, bn2019coupledsimTacas}. All the weaker (linear-time) notions can readily be characterized by slightly adapting the games~\cite{ramalingam_game_2008,bn2021ltbtspectroscope} of the linear-time--branching-time spectrum without internal steps~\cite{van2001linear}. So, our contribution inserts the last puzzle piece to have game characterizations for the whole linear-time--branching-time spectrum with internal steps of~\cite{van1993linear}.

\vspace{1em} \noindent
\textbf{Contributions and Structure.\quad} This paper makes the following contributions:

\begin{itemize}
    \item We assemble a concise \emph{overview of the bisimulation-like properties of contrasimilarity} in Section~\ref{sec:contrasim}.
    \item In Section~\ref{sec:contrasim-game}, we present a \emph{new game for the contrasimulation preorder} based on subset constructions, which is finite (albeit exponential) for finite-state processes.
    \item Section~\ref{sec:contrasim-game-correctness} proves the \emph{correctness of the game characterization} by relating defender winning strategies and contrasimulations, which turns out to be slightly trickier than for stronger equivalences.
    \item Section~\ref{sec:related} links our contributions to other research and hints at the relationship to modal-logical characterizations. 
    \item All definitions and proofs of this paper have been \emph{formalized in the interactive proof assistant Isabelle/HOL} \cite{wenzel2004isabelle}. Each lemma comes with a footnote pointing to its Isabelle proof at \url{https://concurrency-theory.org/contrasimulation-game/Contrasimulation/}.
\end{itemize}

\section{Contrasimilarity: The Weakest Bisimilarity \label{sec:contrasim}}

Notions of equivalence formalize when two process models can be considered indistinguishable given a certain model of communication or observation. The most famous such notion is bisimilarity, which holds if two processes can match each other's transitions repeatedly. Contrasimilarity is a reformulation of this, saying that every transition sequence by one process can be matched by the other process in such a way that they could continue with their roles swapping each time. This section explains systems with internal behavior (Subsection~\ref{subsec:systems-internal}) and gives a formal definition of contrasimilarity (Subsection~\ref{subsec:def-contrasim}). We then show that contrasimilarity and bisimilarity actually are the same concept unless there is internal behavior (Subsection~\ref{subsec:contra-bisim}) and briefly look at systems that are not contrasimilar (Subsection~\ref{subsec:contra-not-linear}). For this section, we omit the proofs; the Isabelle proofs can be found via the footnotes.

\subsection{Systems with Internal Steps \label{subsec:systems-internal}}

In order to illustrate what kinds of systems contrasimilarity equates, we start with an example of two systems with internal communication behavior that are equivalent with respect to contrasimilarity, but not with respect to weak bisimilarity. (The example behaves similarly to the one in~\cite{bell_certifiably_2013}. Its transition system will be given in Figure~\ref{fig:c-philosophers}.)

\begin{example}[Dining Hall Philosophers]
\label{exa:philosophers-dining}

Two philosophers $\ccsIdentifier A$ and $\ccsIdentifier B$ want to eat pasta.
They can get their spaghetti ($\action{sp}$) once the dining hall counter opens ($\action{op})$.
However, there is only one plate ($\action{pl}$), which must be taken before picking up the spaghetti (regardless of whether the counter has opened).
We are waiting for them in the dining area outside; so we only observe whether the counter has opened and who eats, whereas the plate and spaghetti grabbing are invisible to us.

\begin{center}
\includegraphics[width=0.5\textwidth]{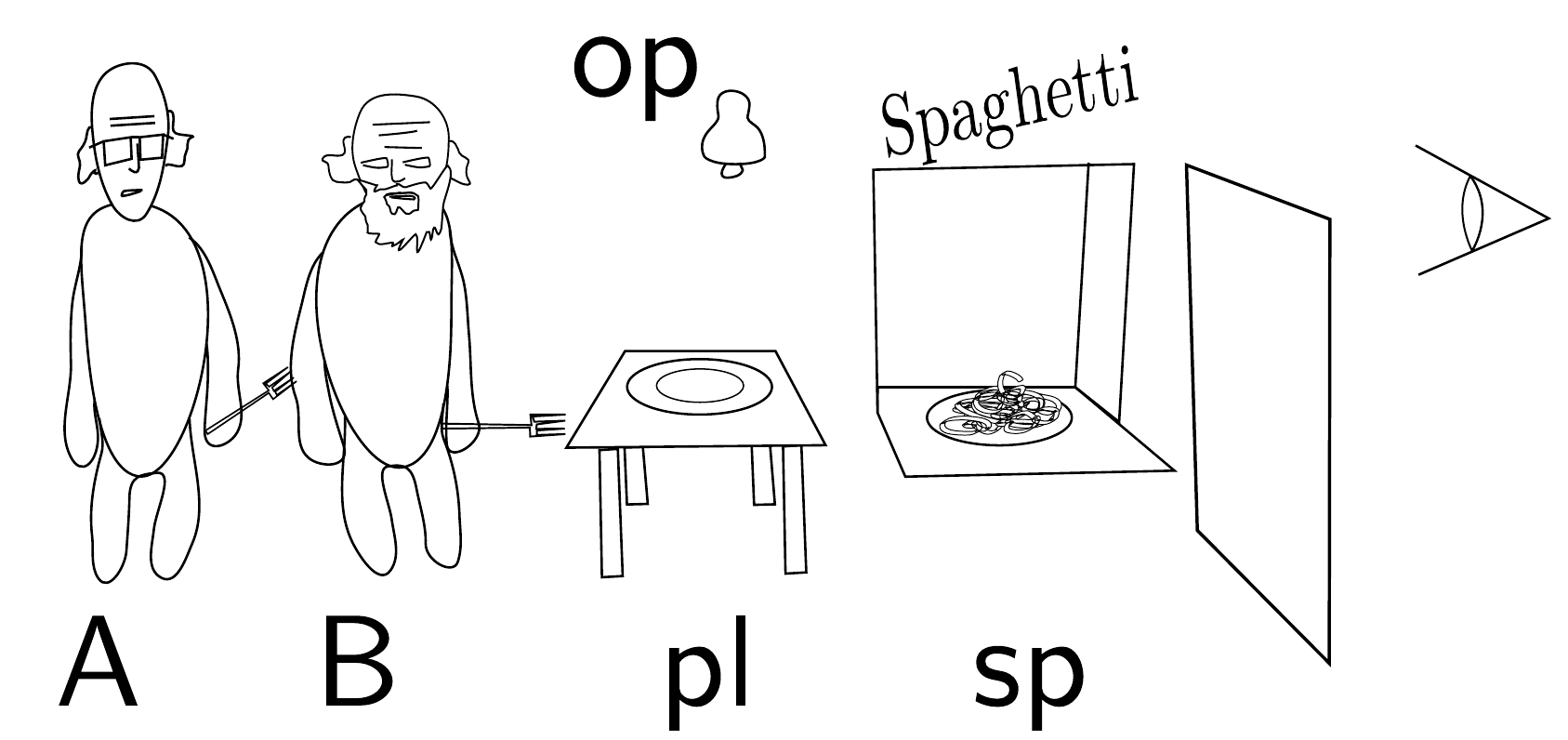}
\end{center}

\noindent
The following $\ccs$ structure models the situation of the philosophers waiting at the counter $\ccsIdentifier{P_c}$ in the notation of Milner~\cite{milner1989communication}.
The resources correspond to sending actions like $\ccsOutm{\action{sp}}$ (which can be consumed only once) and obtaining the resources corresponds to receiving actions like $\action{sp}$. The subprocesses run in parallel ($\ldots \ccsPar \ldots$) and internal communication is hidden from the outside ($\ldots \ccsHide\:\{\action{sp}\}$).
\begin{equation*}
    \ccsIdentifier {P_c} \, \ccsDef \,
    \left(\, \action{pl} \ccsPrefix \action{sp} \ccsPrefix \action{aEats}\,
    \ccsPar \, \action{pl} \ccsPrefix \action{sp} \ccsPrefix \action{bEats}\,
    \ccsPar \, \ccsOutm{\action{pl}} \,
    \ccsPar \, \action{op} \ccsPrefix \ccsOutm{\action{sp}} \,
    \right)\;\ccsHide\:\{\action{pl}, \action{sp}\}\\
\end{equation*}
With $\action{sp}$ being hidden: Does it make a difference to the observer if it is not the spaghetti counter waiting for the opening event before handing out pasta, but rather the philosophers waiting for the event before grabbing the pasta? This would amount to the following model of patient philosophers $\ccsIdentifier{P_p}$.
\begin{equation*}
\ccsIdentifier {P_p} \, \ccsDef \,
    \left(\, 
    \action{pl} \ccsPrefix \action{op} \ccsPrefix \action{sp} \ccsPrefix \action{aEats}\,
    \ccsPar \, \action{pl} \ccsPrefix \action{op} \ccsPrefix \action{sp} \ccsPrefix \action{bEats}\, 
    \ccsPar \, \ccsOutm{\action{pl}} \,
    \ccsPar \, \ccsOutm{\action{sp}} \,
    \right)\;\ccsHide\:\{\action{pl}, \action{sp} \}\\
\end{equation*}
If one's application needs to abstract over “where exactly” internal waiting is happening, and thus equate $\ccsIdentifier {P_c}$ and $\ccsIdentifier {P_p}$, one has to pick contrasimilarity (or a weaker equivalence) for the semantics.
\end{example}

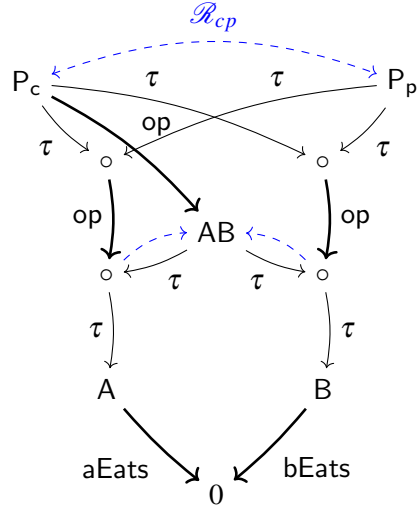
\begin{figure}[t]
  \begin{centering}
\begin{tikzpicture}[->,auto,node distance=2cm, rel/.style={dashed,font=\it, blue}, ext/.style={line width=1pt}]

\node (Pc){$\ccsIdentifier{P_c}$};

\node (Pp) [right of=Pc, node distance=5cm] {$\ccsIdentifier{P_p}$};

\node (A1) [below right of=Pc, node distance=1.5cm] {$\circ$};
\node (A2) [below of=A1, node distance = 1.5cm] {$\circ$};
\node (A3) [below of=A2, node distance = 1.5cm] {$\ccsIdentifier A$};

\node (B1) [below left of=Pp, node distance=1.5cm] {$\circ$};
\node (B2) [below of=B1, node distance = 1.5cm] {$\circ$};
\node (B3) [below of=B2, node distance = 1.5cm] {$\ccsIdentifier B$};

\node (Stop) [below left of=B3, node distance = 2.0cm] {$\ccsStop$};

\node (AB) [above of=Stop, node distance = 3.5cm] {$\ccsIdentifier{AB}$};

\path
(Pc) edge [bend right=10, swap] node {$\tau$} (A1)
(A1) edge [bend left=10, swap, ext] node {$\action{op}$} (A2)
(A2) edge [bend left=10, swap] node {$\tau$} (A3)
(A3) edge [bend right=5, swap, ext] node {$\action{aEats}$} (Stop)
(Pc) edge [bend left=10, pos=0.3] node {$\tau$} (B1)
(B1) edge [bend left=10, ext] node {$\action{op}$} (B2)
(B2) edge [bend left=10] node {$\tau$} (B3)
(B3) edge [bend left=5, ext] node {$\action{bEats}$} (Stop)
(Pc) edge [bend left=10, ext] node {$\action{op}$} (AB)
(AB) edge [bend left=10] node {$\tau$} (A2)
(AB) edge [bend right=10, swap] node {$\tau$} (B2)
;

\path
(Pp) edge[bend right=10, swap, pos=0.3] node {$\tau$} (A1)
(Pp) edge[bend left=10] node {$\tau$} (B1)
;

\path
(Pc) edge [<->, rel, bend left=20] node {$\rel{R}_{cp}$} (Pp)
(A2) edge [rel, bend left=20] node {} (AB)
(B2) edge [rel, bend right=20] node {} (AB)
;

\end{tikzpicture}
\par\end{centering}
\caption{The reflexive closure of $\textcolor{blue}{\rel{R}_{cp}}$ is a contrasimulation on the philosopher system from Example~\ref{exa:philosophers-dining}.
}
\label{fig:c-philosophers}
\end{figure}



\begin{defi}
\label{def lts}
A \emph{labeled transition system}, or \emph{LTS} for short, $(S, \acttau, \xrightarrow{})$ consists of a set of states $S$, of a set $\acttau = \act \cup \{\tau\}$ of visible actions $\act$ and a special \emph{internal action} $\tau \notin \act$, and of a transition relation ${\xrightarrow{}} \subseteq S \times \acttau \times S$.
\end{defi}

\noindent
The transition system for Example~\ref{exa:philosophers-dining} is depicted as the black part in Figure~\ref{fig:c-philosophers}. All deadlocks are combined into the state $\ccsStop$. The transitions where communication internal to the system happens are labeled by $\tau$, which denotes internal behavior. Where there are multiple internal transitions originating from the same process state, the system performs an \emph{internal choice}. In process $\ccsIdentifier{P_p}$, the internal choice happens at the start, whereas in $\ccsIdentifier{P_c}$, the choice is interleaved with the observable occurrence of $\action{op}$.

\begin{defi}
\label{arrow notation} We employ the following notation for transitions in the system (with $p,p',q \in S$; $\alpha \in \acttau$):
\begin{itemize}
  \item Strong steps: $p \trans{\alpha} p'$ iff $(p, \alpha, p') \in {\xrightarrow{}}$.
  \item Internal steps: $p \wtwt p'$ iff $p \trans{\tau}^* p'$.
  \item Delay steps: $p \weakStepDelay{\alpha} p'$ iff $\alpha = \tau$ with $p \wtwt p'$ or $\alpha \in \act$ with $p \wtwt\trans{\alpha} p'$.
  \item Weak steps: $p \whal p'$ iff $p \weakStepDelay{\alpha}\wtwt p'$.
  \item Weak word steps: $p \wvt{w} p'$ iff $p \whtrans{w_0}\whtrans{w_1} ... \whtrans{w_n} p'$ with $\vw = w_0w_1 \ldots w_n \in \wact$ or $p \wtwt p'$ for the empty word $\vw = \eps$.
  \item Absence of steps: $p \centernot{\trans{\alpha}}$, $p \centernot{\whtrans{\alpha}}$ and $p \centernot{\wvt{w}}$ to denote that certain transitions are not possible from $p$. $p$ is called \emph{stable} iff $p \centernot{\trans{\tau}}$.
  \item Lifting of steps to sets: $P \trans{\alpha} P'$ iff $P' = \{p' \mid \exists p \in P : p \trans{\alpha} p'\}$, and $P \trans{\alpha} p'$ iff there is a $p \in P$ with $p \trans{\alpha} p'$; analogously for $\wtwt$, $\weakStepDelay{\alpha}$, $\whtrans{\alpha}$, and $\wvt{w}$.
\end{itemize}
\end{defi}

\noindent
Note that delay steps $\weakStepDelay{a}$ differ from the more common weak steps $\whtrans{a}$ ($a \in \act$) in that the step ends in the strong $\trans{a}$-transition with no trailing internal behavior. Also note that $P \trans{\alpha} \{p'\}$ is stronger than $P \trans{\alpha} p'$ in that it rules out possible other $\trans{\alpha}$-transitions.

\subsection{Defining Contrasimilarity \label{subsec:def-contrasim}}

It is common to define notions of behavioral equivalence and behavioral preorders in terms of relations between process states that fulfill certain properties. For instance, weak similarity is defined in terms of weak simulation relations, which also induce a weak simulation preorder. Contrasimilarity is obtained by making a subtle change to the weak simulation property.

\begin{defi}
\label{weak sim def}
A \emph{weak simulation} is a relation $\rel R$ where, for all $(p,q) \in \rel R$ with $p \trans{\alpha} p'$, there is a $q'$ with $q \whal q'$ and $(p', q') \in \rel R$. We say that $p$ is \emph{weakly simulated by} $q$, written $p \pre{WS} q$, iff there is a weak simulation $\rel R$ with $(p,q) \in \rel R$. If $p \pre{WS} q \pre{WS} p$, the two are weakly similar, written $p \eq{WS} q$. If $\rel R$ is a symmetric weak simulation, the processes are called \emph{weakly bisimilar} ($\eq{WB}$), which implies all of the above.
\end{defi}

\noindent
On the processes of Example~\ref{exa:philosophers-dining}, $\{(p,p) \mid p\in S\} \cup \{ (\ccsIdentifier{P_p}, \ccsIdentifier{P_c})\}$ is a weak simulation, because the steps from $\ccsIdentifier{P_p}$ are a subset of the steps from $\ccsIdentifier{P_c}$. This implies $\ccsIdentifier{P_p} \pre{WS} \ccsIdentifier{P_c}$.

However, there is no weak simulation in the opposite direction, because $\ccsIdentifier {P_c}$ can $\action{op}$-step to a state where $\action{aEats}$ and $\action{bEats}$ are weakly enabled, while $\ccsIdentifier {P_p}$ cannot. Thus, weak similarity and hence also weak bisimilarity distinguish the two systems.

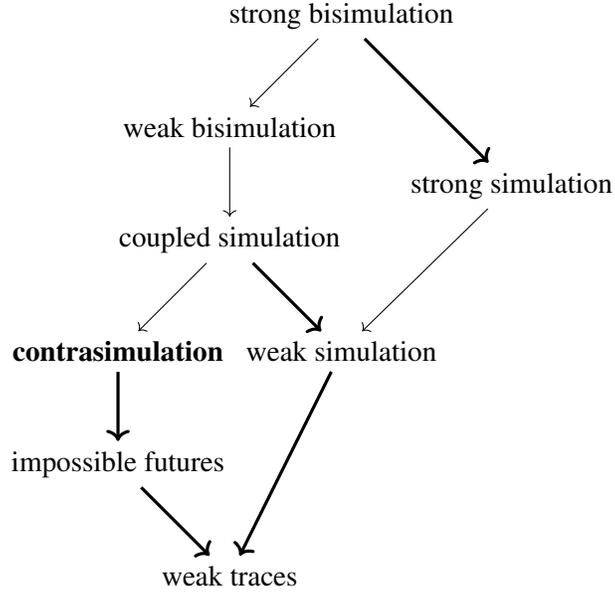
\begin{figure}[t]
  \begin{centering}
\begin{tikzpicture}[->,auto,node distance=1.5cm]

\node (SB){strong bisimulation};
\node (SS) [below right of=SB, node distance=3.2cm] {strong simulation};
\node (WB) [below left of=SB, node distance=2.1cm] {weak bisimulation};
\node (CS) [below of=WB] {coupled simulation};
\node (WS) [below right of=CS, node distance=2.1cm] {weak simulation};
\node (C) [below left of=CS, node distance=2.1cm] {\textbf{contrasimulation}};
\node (IF) [below of=C] {impossible futures};
\node (WT) [below right of=IF, node distance=2.1cm] {weak traces};

\path
(SB) edge (WB)
(SB) edge[line width=1.2pt] (SS)
(SS) edge (WS)
(WB) edge (CS)
(CS) edge (C)
(CS) edge[line width=1.2pt] (WS)
(C) edge[line width=1.2pt] (IF)
(IF) edge[line width=1.2pt] (WT)
(WS) edge[line width=1.2pt] (WT);

\end{tikzpicture}
\par\end{centering}
\caption{Hierarchy of equivalences. Arrows denote implications of preordering. Thinner arrows collapse into bi-implication for systems without internal steps. \label{fig:equivalences}}
\end{figure}

Due to its inherent recursiveness, Definition~\ref{weak sim def} can also be rephrased in terms of words instead of single actions:

\begin{lemma}
$\rel R$ is a weak simulation precisely if, for all $(p,q) \in \rel R$ with $\vw \in \wact$ and $p \wwvt p'$, there is a $q'$ with $q \wwvt q'$ and $(p', q') \in \rel R$.\isbref{lemma}{Weak_Relations}{weak_sim_word}
\end{lemma}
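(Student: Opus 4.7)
The plan is to prove the two implications separately, with the forward direction (word condition implies single-step weak simulation) being immediate and the reverse direction requiring induction on word length.

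For the first direction, assume the word condition holds. Given $(p,q) \in \rel R$ and $p \trans{\alpha} p'$, I would pick $\vw = \eps$ if $\alpha = \tau$ and $\vw = \alpha$ otherwise. In either case $p \trans{\alpha} p'$ entails $p \wvt{\vw} p'$: for $\alpha = \tau$, the single $\tau$-transition is a $\wtwt$-step, hence a $\wepvt$-step; for $\alpha \in \act$, a strong $\alpha$-step is in particular a $\whal$-step, hence a $\wvt{\alpha}$-step. Applying the word condition yields $q'$ with $q \wvt{\vw} q'$ and $(p', q') \in \rel R$. Unpacking $q \wvt{\vw} q'$ in both cases gives $q \whal q'$, as required.

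For the reverse direction, assume $\rel R$ is a weak simulation in the sense of Definition~\ref{weak sim def}. I would proceed by induction on $|\vw|$. As a preparatory sub-lemma I would establish the empty-word case: if $(p,q) \in \rel R$ and $p \wtwt p'$, then there is some $q'$ with $q \wtwt q'$ and $(p',q') \in \rel R$. This follows by a second induction on the number of $\tau$-steps witnessing $p \trans{\tau}^n p'$, repeatedly invoking the weak simulation property with $\alpha = \tau$ and composing the resulting $\whtrans{\tau}$-steps (which collapse to $\wtwt$-steps).

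For the inductive step, consider $\vw = a \cdot \vu$ with $a \in \act$ and $p \wvt{a \vu} p'$. By definition this decomposes as $p \whtrans{a} p_1 \wvt{\vu} p'$ for some $p_1$, and $p \whtrans{a} p_1$ further splits into $p \wtwt p_2 \trans{a} p_3 \wtwt p_1$. Apply the empty-word case to $p \wtwt p_2$ to obtain $q_2$ with $q \wtwt q_2$ and $(p_2, q_2) \in \rel R$; apply weak simulation to $p_2 \trans{a} p_3$ to get $q_3$ with $q_2 \whtrans{a} q_3$ and $(p_3, q_3) \in \rel R$; then apply the empty-word case once more to $p_3 \wtwt p_1$ to obtain $q_1$ with $q_3 \wtwt q_1$ and $(p_1, q_1) \in \rel R$. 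Concatenating yields $q \whtrans{a} q_1$. Finally, the outer induction hypothesis applied to $\vu$ and $(p_1, q_1)$ produces $q'$ with $q_1 \wvt{\vu} q'$ and $(p', q') \in \rel R$, whence $q \wvt{a \vu} q'$.

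The main obstacle is merely bookkeeping: keeping the different step notations ($\trans{}$, $\wtwt$, $\weakStepDelay{}$, $\whtrans{}$, $\wvt{}$) straight and justifying the decomposition of a weak word step into a sequence of $\whtrans{a_i}$-steps, each of which can be further split into $\wtwt$-flanked strong transitions that the weak simulation property can match piecewise. No step requires a genuinely new idea.
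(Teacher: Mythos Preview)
Your argument is correct and is the standard route: the single-step-to-word direction is handled by the obvious embedding of a strong step into a length-$0$ or length-$1$ weak word step, and the word-to-single-step direction proceeds by induction on the word, using a preparatory sub-induction for the $\wtwt$ case so that each $\whtrans{a}$ can be decomposed and matched piecewise. The paper itself omits this proof entirely (Section~\ref{sec:contrasim} explicitly defers all proofs to the Isabelle formalization via the footnote link), so there is no textual argument to compare against; your write-up would serve perfectly well as the missing pen-and-paper justification.
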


\noindent
This observation motivates why the following definition with $p'$ and $q'$ swapping sides at the end has been named \emph{contra-}simulation:

\begin{defi} 
\label{contra def}
A \emph{contrasimulation} is a relation $\rel R$ where, for all $(p,q) \in \rel R$ with $\vw \in \wact$ and $p \wwvt p'$, there is a $q'$ with $q \wwvt q'$ and $(q', p') \in \rel R$. The contrasimulation preorder $\pre{C}$ and contrasimilarity $\eq{C}$ are defined analogously to Definition~\ref{weak sim def}.
\end{defi}

\noindent
The reflexive closure of the relation in Figure~\ref{fig:c-philosophers}, $\rel{R}_{cp} \cup \{(p,p) \mid p\in S\}$, is a contrasimulation. Hence, $\ccsIdentifier{P_c}$ and $\ccsIdentifier{P_p}$ are contrasimilar, $\ccsIdentifier{P_c} \eq{C} \ccsIdentifier{P_p}$.

Let us quickly stress that the contrasimulation preorder indeed \emph{is} a sensible behavioral preorder (which is not true for its intransitive neighbors eventual simulation~\cite{bell_certifiably_2013} and stability-coupled simulation~\cite{parrow1994complete}):


\begin{lemma}
Properties of $\pre{C}$:

\begin{itemize}
    \item $\pre{C}$ is transitive,\isbref{lemma}{Contrasimulation}{contrasim_trans} as the interleaved concatenation $\rel R_1 \rel R_2 \cup \rel R_2 \rel R_1$ of two contrasimulations $\rel R_1$ and $\rel R_2$ is itself a contrasimulation.\isbref{lemma}{Contrasimulation}{contrasim_trans_constructive}
    \item $\pre{C}$ is reflexive.\isbref{lemma}{Contrasimulation}{contrasim_refl} (More generally: $p \wtwt p'$ implies $p' \pre{C} p$.\isbref{lemma}{Contrasimulation}{contrasim_tau_step})
    \item $\pre{C}$ is itself a contrasimulation\isbref{lemma}{Contrasimulation}{contrasim_preorder_is_contrasim} (and thus the greatest contrasimulation\isbref{lemma}{Contrasimulation}{contrasim_preorder_is_greatest}).
\end{itemize}
\end{lemma}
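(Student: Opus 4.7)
The plan is to establish each bullet by exhibiting an explicit contrasimulation, relying only on Definition~\ref{contra def} and the obvious absorption property that prepending (or appending) internal steps to a weak word step again yields a weak word step with the same word: if $b \wtwt a$ and $a \wwvt a'$, then $b \wwvt a'$.

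For reflexivity together with the stronger claim $p \wtwt p'$ implies $p' \pre{C} p$, I would define $\rel R := \{(a,b) \mid b \wtwt a\}$ and verify it is a contrasimulation. Given $(a,b) \in \rel R$ and $a \wwvt a'$, prepending the internal run $b \wtwt a$ gives $b \wwvt a'$, so the matching move picks $q' := a'$, and $(a',a') \in \rel R$ since $a' \wtwt a'$ always holds. Reflexivity is the instance $a=b$.

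For transitivity, the key technical step is to show that for any two contrasimulations $\rel R_1$ and $\rel R_2$ the interleaved concatenation $\rel R := \rel R_1 \rel R_2 \cup \rel R_2 \rel R_1$ is itself a contrasimulation. The intuition is that each invocation of the contrasimulation property swaps the two sides, so two invocations restore the original orientation while one reverses it; the two summands of $\rel R$ keep track of both parities. Concretely, suppose $(p,r) \in \rel R_1 \rel R_2$ with intermediate witness $q$, and let $p \wwvt p'$. Applying the contrasimulation property of $\rel R_1$ to $(p,q)$ yields $q \wwvt q'$ with $(q', p') \in \rel R_1$; applying the contrasimulation property of $\rel R_2$ to $(q,r)$ along this same word gives $r \wwvt r'$ with $(r', q') \in \rel R_2$. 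Then $(r', p') \in \rel R_2 \rel R_1 \subseteq \rel R$, as required. The case $(p,r) \in \rel R_2 \rel R_1$ is symmetric. Transitivity of $\pre{C}$ follows: if $p \pre{C} q$ via $\rel R_1$ and $q \pre{C} r$ via $\rel R_2$, then $(p,r) \in \rel R_1 \rel R_2 \subseteq \rel R$ witnesses $p \pre{C} r$.

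Finally, $\pre{C}$ is by definition the union of all contrasimulations, and any such union is again a contrasimulation because the matching state $q'$ demanded for a pair $(p,q)$ can be supplied by the particular contrasimulation that contains $(p,q)$. Being a contrasimulation that contains every contrasimulation, $\pre{C}$ is automatically the greatest. The main obstacle is the transitivity argument: the naive candidate $\rel R_1 \rel R_2$ alone is not a contrasimulation, because the side-swap built into Definition~\ref{contra def} leaves the matched pair sitting in the \emph{reversed} composition $\rel R_2 \rel R_1$, forcing one to carry both composition orders simultaneously, exactly as the statement's formulation indicates.
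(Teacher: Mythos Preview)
Your proposal is correct and matches the approach the paper indicates (the paper itself omits textual proofs for this lemma, deferring to the Isabelle formalization, but the statement already names the key constructions you use: the interleaved concatenation for transitivity and the $\wtwt$-generalization for reflexivity). Your explanation of why the plain composition $\rel R_1 \rel R_2$ fails and both parities must be carried is exactly the point behind the paper's formulation.
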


\subsection{Relationship of Contra- and Bi-similarity \label{subsec:contra-bisim}}

Like the weak simulation preorder, the contrasimulation preorder is not symmetric for most systems. Like with weak simulation, the contrasimulation property can be used to characterize weak bisimulation:

\begin{lemma}[Bisimulation characterization]
If a contrasimulation $\rel R$ is symmetric, then $\rel R$ moreover is a weak simulation. (Meaning that, in this case, $(p,q) \in \rel{R}$ implies $p \eq{WB} q$.)\isbref{lemma}{Contrasimulation}{symm_contrasim_is_weak_bisim}
\end{lemma}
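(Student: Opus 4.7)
The plan is to unfold the definitions and use the word-based reformulation of weak simulation (the lemma stated just above Definition~\ref{contra def}) so that the contrasimulation property and the weak simulation property become nearly identical, differing only in the order of the pair at the end. Symmetry then bridges that gap in one step.

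Concretely, I will start from an arbitrary symmetric contrasimulation $\rel R$ and a pair $(p,q) \in \rel R$. To show $\rel R$ is a weak simulation, it suffices (by the word-reformulation lemma) to prove: whenever $p \wwvt p'$ for some $\vw \in \wact$, there is a $q'$ with $q \wwvt q'$ and $(p', q') \in \rel R$. Applying the contrasimulation property to $(p,q)$ and $p \wwvt p'$ yields some $q'$ with $q \wwvt q'$ and $(q', p') \in \rel R$. Symmetry of $\rel R$ then gives $(p', q') \in \rel R$, which is exactly the weak simulation condition.

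Consequently, any symmetric contrasimulation $\rel R$ is a symmetric weak simulation, hence a weak bisimulation by Definition~\ref{weak sim def}. Therefore $(p,q) \in \rel R$ entails $p \eq{WB} q$.

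I do not expect any real obstacle: the proof is essentially a one-line rearrangement once we invoke the word-based characterization of weak simulation. The only thing to be mindful of is that we genuinely need that reformulation (rather than the single-action form of Definition~\ref{weak sim def}), because the contrasimulation clause is phrased in terms of weak word transitions $\wwvt$; trying to match a single strong step $p \trans{\alpha} p'$ directly with the contrasimulation clause would be awkward, but going through words makes the correspondence immediate.
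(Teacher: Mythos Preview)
Your proposal is correct. The paper omits proofs in this section (deferring to the Isabelle formalization), but your argument is precisely the natural one: invoke the word-based reformulation of weak simulation so that the contrasimulation clause and the weak simulation clause differ only in the orientation of the final pair, and let symmetry flip it.
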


\noindent
Unlike weak simulations, contrasimulations are “symmetric up to internal steps.” We call this property \emph{coupling}, as it differentiates coupled simulations from weak simulations~\cite{bisping2020coupled32}:

\begin{lemma}[Coupling]
If $\rel R$ is a contrasimulation, then $(p, q) \in \rel R$ implies there is a $q'$ such that $q \wtwt q'$ and $(q', p) \in \rel R$.\isbref{lemma}{Contrasimulation}{contrasim_coupled}
\end{lemma}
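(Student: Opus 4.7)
The plan is to apply the contrasimulation property directly to the empty word $\vw = \eps$ and the trivial weak word step from $p$ to itself. Concretely, since $p \wtwt p$ holds by reflexivity of the reflexive transitive closure of $\trans{\tau}$, the definition of weak word steps (clause for the empty word) gives $p \wvt{\eps} p$.

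Now, because $\rel R$ is a contrasimulation and $(p,q) \in \rel R$, applying Definition~\ref{contra def} with $\vw := \eps$ and $p' := p$ yields some $q'$ with $q \wvt{\eps} q'$ and $(q', p) \in \rel R$. Again, by the empty-word clause of $\wvt{\eps}$, the step $q \wvt{\eps} q'$ is exactly $q \wtwt q'$, which is the desired conclusion.

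There is essentially no obstacle: the lemma is simply the $\vw = \eps$ instance of the contrasimulation condition, and the only thing to verify is that the definition of $\wvt{\vw}$ indeed permits the empty word and coincides with $\wtwt$ in that case (which is built into Definition~\ref{arrow notation}). The swap of components on the right-hand side of the contrasimulation clause is precisely what makes the coupling property fall out for free, in contrast to weak simulations, for which no such symmetry-up-to-$\tau$ can be derived.
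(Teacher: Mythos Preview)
Your proposal is correct: instantiating the contrasimulation clause with $\vw = \eps$ and $p' = p$ is exactly the intended argument, and the paper itself omits the proof (deferring to the Isabelle formalization), where the same one-line unfolding is used. There is nothing to add.
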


\noindent
On stable states, coupling equates to local symmetry. This is nice for systems without internal behavior:

\begin{lemma}[Contra/Bi-simulation]\label{lem:contrasim-weakest-bisim}
If $\trans{}$ contains no $\tau$-steps, and $\rel R$ is a contrasimulation, then $\rel R$ is symmetric and thus a bisimulation.\isbref{lemma}{Contrasimulation}{contrasim_weakest_bisim}
\end{lemma}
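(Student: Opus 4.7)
The plan is to chain together the two preceding lemmas, Coupling and Bisimulation characterization, once I observe that all flavors of transition coincide when $\trans{}$ has no $\tau$-steps.

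First, I would record the collapse: without $\tau$-steps, $p \wtwt p'$ holds iff $p = p'$ (the reflexive--transitive closure of an empty relation is the identity), and consequently $p \weakStepDelay{\alpha} p'$ as well as $p \whal p'$ coincide with $p \trans{\alpha} p'$ for every $\alpha \in \act$. In particular, every state is stable.

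Next, I would derive symmetry of $\rel R$ from Coupling. Take any $(p,q) \in \rel R$. Coupling gives some $q'$ with $q \wtwt q'$ and $(q', p) \in \rel R$. By the observation above, $q \wtwt q'$ forces $q = q'$, so $(q, p) \in \rel R$. This shows $\rel R$ is symmetric.

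Finally, I would invoke the Bisimulation characterization lemma: a symmetric contrasimulation is a weak simulation. Under the no-$\tau$ assumption, the weak-simulation condition ``$p \trans{\alpha} p'$ implies $\exists q'.\, q \whal q' \land (p', q') \in \rel R$'' reduces to the strong-simulation condition ``$p \trans{\alpha} p'$ implies $\exists q'.\, q \trans{\alpha} q' \land (p', q') \in \rel R$,'' since $\whal$ collapses to $\trans{\alpha}$. A symmetric strong simulation is by definition a bisimulation, which finishes the proof.

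The only subtlety is really just being careful about this collapse of the derived transition relations; once that is noted, Coupling provides symmetry and Bisimulation characterization provides the simulation property, so no new combinatorial work is needed.
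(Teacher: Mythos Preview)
Your proposal is correct and follows precisely the route the paper sets up: the paper omits the proof in the text, but the ordering of lemmas (Coupling immediately followed by this one, with Bisimulation characterization just before) makes clear that the intended argument is exactly the chaining you describe, together with the collapse of $\wtwt$ to the identity in $\tau$-free systems. There is nothing to add.
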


\noindent
Accordingly, ${\pre{C}} = {\eq{WB}} = {\eq{SB}}$ for systems without internal steps. In this sense, contrasimilarity is closer to bisimilarity than weak similarity: A weak simulation on a $\tau$-free system is just a strong simulation but does not need to be a bi-simulation. The hierarchy is depicted in Figure~\ref{fig:equivalences}.

\subsection{No Shortcuts \label{subsec:contra-not-linear}}

We already mentioned that contrasimilarity grants the equality $a \ccsPrefix (\tau\ccsPrefix X \ccsChoice \tau\ccsPrefix Y) = a \ccsPrefix X \ccsChoice a \ccsPrefix Y$. However, this does not mean that internal choice may commute over actions: The equality $a \ccsPrefix (\tau\ccsPrefix X \ccsChoice \tau\ccsPrefix Y) = \tau\ccsPrefix a\ccsPrefix X \ccsChoice \tau\ccsPrefix a\ccsPrefix Y$ is \emph{not sound} for contrasimilarity, as the following example illustrates:

\begin{example}[Locked-out Philosophers]
\label{exa:philosophers-dining-late}

Consider this slight modification of $\ccsIdentifier{P_c}$, where already the scarce plate $\action{pl}$ is guarded by the opening $\action{op}$:
\begin{equation*}
    \ccsIdentifier {P_l} \, \ccsDef \,
    \left(\, \action{pl} \ccsPrefix \action{sp} \ccsPrefix \action{aEats}\,
    \ccsPar \, \action{pl} \ccsPrefix \action{sp} \ccsPrefix \action{bEats}\, 
    \ccsPar \, \action{op} \ccsPrefix \ccsOutm{\action{pl}} \,
    \ccsPar \, \ccsOutm{\action{sp}} \,
    \right)\;\ccsHide\:\{\action{pl},\action{sp} \}\\
\end{equation*}
As depicted in Figure~\ref{fig:philosophers-dining-late}, the result of this change is that the decision between philosophers $\ccsIdentifier{A}$ and $\ccsIdentifier{B}$ can happen \emph{only} after $\action{op}$. 

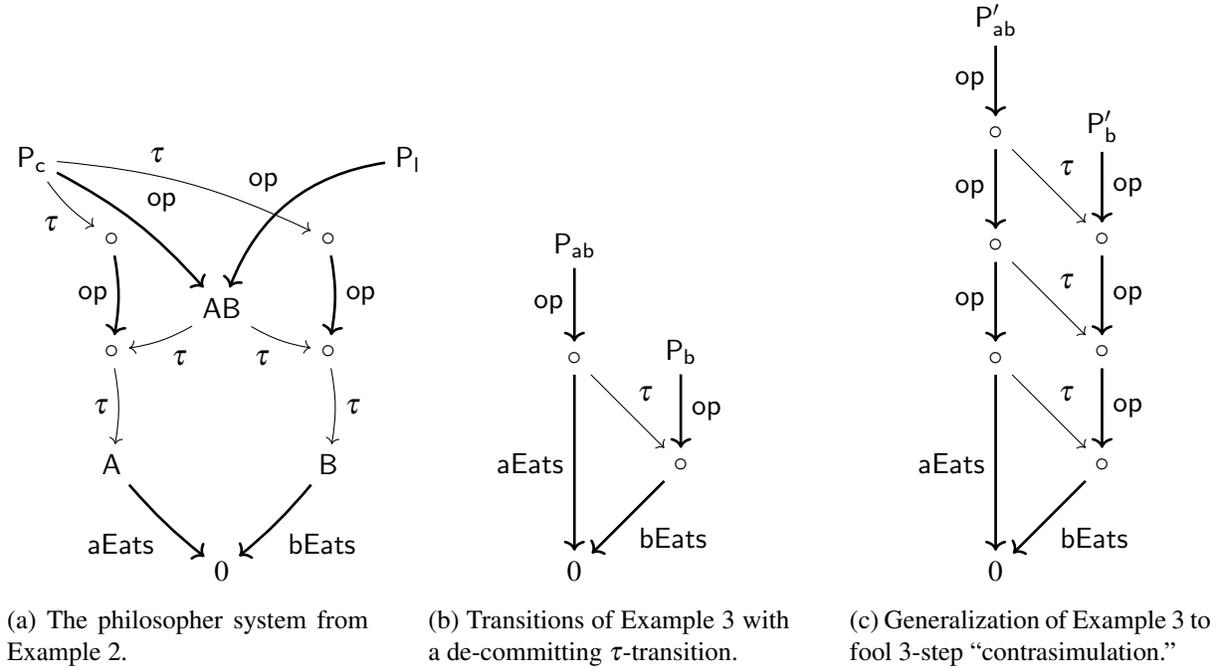
\begin{figure}
     \centering
     \begin{subfigure}[b]{0.3\textwidth}
        \centering
         
        \begin{tikzpicture}[->,auto,node distance=2cm, rel/.style={dashed,font=\it, blue}, ext/.style={line width=1pt}]

            \node (Pc){$\ccsIdentifier{P_c}$};
            
            \node (Pl) [right of=Pc, node distance=5cm] {$\ccsIdentifier{P_l}$};
            
            \node (A1) [below right of=Pc, node distance=1.5cm] {$\circ$};
            \node (A2) [below of=A1, node distance = 1.5cm] {$\circ$};
            \node (A3) [below of=A2, node distance = 1.5cm] {$\ccsIdentifier A$};
            
            \node (B1) [below left of=Pp, node distance=1.5cm] {$\circ$};
            \node (B2) [below of=B1, node distance = 1.5cm] {$\circ$};
            \node (B3) [below of=B2, node distance = 1.5cm] {$\ccsIdentifier B$};
            
            \node (Stop) [below left of=B3, node distance = 2.0cm] {$\ccsStop$};
            
            \node (AB) [above of=Stop, node distance = 3.5cm] {$\ccsIdentifier{AB}$};
            
            \path
            (Pc) edge [bend right=10, swap] node {$\tau$} (A1)
            (A1) edge [bend left=10, swap, ext] node {$\action{op}$} (A2)
            (A2) edge [bend left=10, swap] node {$\tau$} (A3)
            (A3) edge [bend right=5, swap, ext] node {$\action{aEats}$} (Stop)
            (Pc) edge [bend left=10, pos=0.3] node {$\tau$} (B1)
            (B1) edge [bend left=10, ext] node {$\action{op}$} (B2)
            (B2) edge [bend left=10] node {$\tau$} (B3)
            (B3) edge [bend left=5, ext] node {$\action{bEats}$} (Stop)
            (Pc) edge [bend left=10, ext] node {$\action{op}$} (AB)
            (AB) edge [bend left=10] node {$\tau$} (A2)
            (AB) edge [bend right=10, swap] node {$\tau$} (B2)
            ;
            
            \path
            (Pl) edge [bend right=30, swap, ext] node {$\action{op}$} (AB)
            ;
            
        \end{tikzpicture}
        \caption{The philosopher system from Example~\ref{exa:philosophers-dining-late}.}
        \label{fig:philosophers-dining-late}
     \end{subfigure}
     \hfill
     \begin{subfigure}[b]{0.3\textwidth}
         \centering
         
         \begin{tikzpicture}[->,auto,node distance=2cm, rel/.style={dashed,font=\it, blue}, ext/.style={line width=1pt}]

            \node (Pab){$\ccsIdentifier{P_{ab}}$};
            \node (Pb) [below right of=Pab, node distance=2cm] {$\ccsIdentifier{P_b}$};
            
            \node (A1) [below of=Pab, node distance=1.5cm] {$\circ$};
            
            \node (B1) [below of=Pb, node distance=1.5cm] {$\circ$};
            
            \node (Stop) [below left of=B1, node distance = 2.0cm] {$\ccsStop$};
            
            \path
            (Pab) edge[swap, ext] node {$\action{op}$} (A1)
            (A1) edge[swap, ext] node {$\action{aEats}$} (Stop)
            
            (Pb) edge[ext] node {$\action{op}$} (B1)
            (B1) edge[ext] node {$\action{bEats}$} (Stop)
            
            (A1) edge node {$\tau$} (B1);
        
        \end{tikzpicture}
         
        \caption{Transitions of Example~\ref{exa:instable-choice} with a de-committing $\tau$-transition.}
        \label{fig:de-committing-1}
     \end{subfigure}
     \hfill
     \begin{subfigure}[b]{0.3\textwidth}
         \centering
         
         \begin{tikzpicture}[->,auto,node distance=2cm, rel/.style={dashed,font=\it, blue}, ext/.style={line width=1pt}]

            \node (Pab){$\ccsIdentifier{P_{ab}'}$};
            \node (Pb) [below right of=Pab, node distance=2cm] {$\ccsIdentifier{P_b'}$};
            
            \node (A1) [below of=Pab, node distance=1.5cm] {$\circ$};
            \node (A2) [below of=A1, node distance = 1.5cm] {$\circ$};
            \node (A3) [below of=A2, node distance = 1.5cm] {$\circ$};
            
            \node (B1) [below of=Pb, node distance=1.5cm] {$\circ$};
            \node (B2) [below of=B1, node distance = 1.5cm] {$\circ$};
            \node (B3) [below of=B2, node distance = 1.5cm] {$\circ$};
            
            \node (Stop) [below left of=B3, node distance = 2.0cm] {$\ccsStop$};
            
            \path
            (Pab) edge[swap, ext] node {$\action{op}$} (A1)
            (A1) edge[swap, ext] node {$\action{op}$} (A2)
            (A2) edge[swap, ext] node {$\action{op}$} (A3)
            (A3) edge[swap, ext] node {$\action{aEats}$} (Stop)
            
            (Pb) edge[ext] node {$\action{op}$} (B1)
            (B1) edge[ext] node {$\action{op}$} (B2)
            (B2) edge[ext] node {$\action{op}$} (B3)
            (B3) edge[ext] node {$\action{bEats}$} (Stop)
            
            (A1) edge node {$\tau$} (B1)
            (A2) edge node {$\tau$} (B2)
            (A3) edge node {$\tau$} (B3);
        
        \end{tikzpicture}
         
        \caption{Generalization of Example~\ref{exa:instable-choice} to fool 3-step “contrasimulation.”}
        \label{fig:generalizing-de-committing}
     \end{subfigure}
     
        \caption{Processes that are not contrasimilar.}
        \label{fig:counterexamples}
\end{figure}

The change is noticed by contrasimilarity, i.e.\ $\ccsIdentifier{P_c} \not\pre{C} \ccsIdentifier{P_l}$. The reason is that $\ccsIdentifier{P_c} \pre{C} \ccsIdentifier{P_l}$ with the left process resolving its choice would imply $\ccsIdentifier{P_l} \pre{C} \action{op} \ccsPrefix \tau \ccsPrefix \action{aEats}$. But this does not hold because $\ccsIdentifier{P_l} \wtrans{\action{op},\action{bEats}} \ccsStop$ but not $\action{op} \ccsPrefix \tau \ccsPrefix \action{aEats} \wtrans{\action{op},\action{bEats}} \ccsStop$.
\end{example}

\noindent
It is also worthwhile to observe that, in general, the definition of contrasimulation (Definition \ref{contra def}) cannot straight-forwardly be simplified to use single steps $\whtrans{\alpha}$ instead of words $\wvt{a}$, as is the case with its finer siblings like weak bisimulation. Such a simplification is used by de Hoop~\cite{deHoop2017cfpPdp}. However, this is not sound for general systems due to the alternating sides in the definition.

\begin{example}[Instable choice]\label{exa:instable-choice}

Consider $\ccsIdentifier{P_{ab}} \ccsDef \action{op}\ccsPrefix(\action{aEats} \ccsChoice \tau\ccsPrefix\action{bEats})$ and $\ccsIdentifier{P_b} \ccsDef \action{op}\ccsPrefix\action{bEats}$ whose transitions can be found in Figure~\ref{fig:de-committing-1}. The processes are clearly not even weakly trace equivalent. But single-step “contrasimulation” cannot tell them apart, as neither $\ccsIdentifier{P_{ab}} \whtrans{\action{op}} \action{aEats} \ccsChoice \tau\ccsPrefix\action{bEats}$ nor $\ccsIdentifier{P_{ab}} \whtrans{\action{op}} \action{bEats}$ (matched by $\ccsIdentifier{P_{b}} \whtrans{\action{op}} \action{bEats}$) lead to a situation that would not (contra-)simulate $\action{bEats}$.

The counter-example can be generalized as hinted at in Figure~\ref{fig:generalizing-de-committing}, which would need at least a word length of four to distinguish the named processes.
\end{example}

\section{The Contrasimulation Game \label{sec:contrasim-game}}

The contrasimulation preorder can be characterized by a game between two opposing players, the \emph{attacker} and the \emph{defender}. For a given $p, q \in S$, the attacker seeks to disprove $p \pre{C} q$, while the defender seeks to maintain $p \pre{C} q$. We first introduce some general thoughts about such games in Subsection~\ref{subsec:games-prelims}, and then present the contrasimulation game at the core of our contribution in Subsection~\ref{subsec:the-contrasim-game}.

\subsection{Preliminaries \label{subsec:games-prelims}}

For this paper, we use Gale-Stewart-style games in the tradition of Stirling~\cite{stirling1999bisimulation} where the attacker wins by getting the defender stuck, and the defender wins by not getting stuck.

%
\begin{defi}[Games]
  \label{def:simple-game}
  A \emph{simple reachability game} $\game[g_{0}] = (G, G_d, \gameMove, g_0)$ consists of
  \begin{itemize}
    \item a set of \emph{game positions} $G$, partitioned into
    \begin{itemize}
      \item a set of \emph{defender positions} $G_{d}\subseteq G$
      \item and \emph{attacker positions} $G_{a} := G \setminus G_{d}$,
    \end{itemize}
    \item a graph of \emph{game moves} ${\gameMove}\subseteq G\times G$, and
    \item an \emph{initial position} $g_{0}\in G$.
  \end{itemize}
\end{defi}

\begin{defi}[Plays and wins]
  \label{def:plays-wins}
  We call the infinite and finite paths $g_{0}g_{1}\ldots\in G^{\infty}$ with $g_{i}\gameMove g_{i+1}$ \emph{plays} of $\game[g_{0}]$. The defender \emph{wins} infinite plays. If a finite play $g_{0}\dots g_{n}\!\centernot{\gameMove}$ is stuck, the stuck player loses: The defender wins if $g_{n}\in G_{a}$, and the attacker wins if $g_{n}\in G_{d}$. Equivalently, the defender wins precisely those plays in which the defender is not stuck. 
\end{defi}
\begin{defi}[Strategies and winning strategies]
  \label{def:strategies}
  A \emph{defender strategy} is a partial mapping from initial play fragments to next moves $f \subseteq \set{(g_0 \ldots g_n, g_{n+1}) \mid g_n \in G_d \land g_n \gameMove g_{n+1}}$. 
  A play $g$ is consistent with a defender strategy $f$ iff, for each move $g_i \gameMove g_{i+1}$ with $g_i \in G_d$ where $f(g_0 \ldots g_i)$ is defined, we have $g_{i+1} = f(g_0 \ldots g_i)$. We denote the set of plays $g$ consistent with $f$ by $\playsConsF{f}{g_0}$ for the initial game position $g_0$. If every stuck or infinite play $g \in \playsConsF{f}{g_0}$ is won by the defender, then $f$ is a winning strategy for the defender. The player with a winning strategy for $\game[g_{0}]$ is said to \emph{win} $\game[g_{0}]$.
\end{defi}

%
%

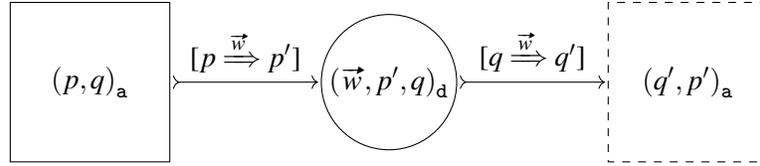
\begin{figure}[t]
    \centering
    
    \begin{tikzpicture}%
      [
       node distance=2cm,
       shorten <=1pt,
       shorten >=1pt,
       auto,
       baseline={([yshift={-1ex}]current bounding box.north)}
      ]
    
    \tikzset{%
        square/.style={regular polygon,regular polygon sides=4, minimum size = 3cm, inner sep=0cm}
        }
        \tikzstyle{mycircle}=[circle, minimum size = 1.8cm, draw=black, inner sep=0cm]
    
        \node (0) [square, draw=black]                             {$\attackerPos{p,q}$};
        \node (2) [mycircle, draw=black, right=of 0]          {$\defenderPos{\vw, p', q}$};
        \node (1) [square, draw=black, dashed, right=of 2]             {$\attackerPos{q',p'}$};
    
    \path[>->]
        (0)         edge   node    {[$ p \wwvt p'$]}    (2)
        (2)         edge    node    {[$ q \wwvt q'$]}    (1)
       ;
    \path[-]
    ;
    \end{tikzpicture}
    
    \caption{Schematic basic contrasimulation game. Boxes denote attacker positions, circles denote
defender positions and arrows denote game moves. Each game move is only possible if the condition in square brackets is satisfied. Dashed boxes are attacker positions with a new variable assignment and admit analogous moves to the solid boxes.}
    \label{fig:basic_game}
\end{figure}

\noindent
All simple games are \emph{determined}, that is, either the defender or the attacker wins. The winning regions of finite simple games can be computed in linear time in the number of game moves (cf.\ \cite{graedel2007finite}). Therefore, it is desirable to find finite game characterizations of equivalences. For many weak equivalences, such characterizations can be obtained directly from their standard coinductive characterization.

For contrasimilarity, this direct route is less helpful due to its word-based definition. That is why we only briefly discuss it here, and then move on to a different approach in the next subsection.

If we directly transfer Definition~\ref{contra def} into a game, we arrive at the following game alternating between attacker and defender: The attacker may challenge $p \pre{C} q$ by selecting a word $\vw \in \wact$ and a $p' \in S$ with $p \wwvt p'$, to which the defender has to name a $q' \in S$ with $q \wwvt q'$. The sides of the game then swap, and the attacker can go on to question $q' \pre{C} p'$. The attacker wins if the defender is unable to answer with an appropriate $q'$, and the defender wins if the play goes on forever. 
%
%

A schematic model of the basic contrasimulation game is given in Figure~\ref{fig:basic_game}. We proved its correctness in the Isabelle formalization.\isbref{theorem}{Basic_Contrasim_Game}{winning_strategy_in_basic_game_iff_contrasim}
Unfortunately, in finite-state processes with cycles, the attacker can challenge with infinitely many words. So, such games will have an infinite number of game positions, even for finite systems. Therefore, we decided not to devote more space to this game in this paper and rather move on to a different game in the following subsection.

\subsection{The Game \label{subsec:the-contrasim-game}}

Let us now turn to the contrasimulation game that is the core contribution of this paper. The idea is to restrict the players' moves to single actions $\alpha \in \acttau$. This means breaking the attacker's word challenge for $p \pre{C} q$ into a simulation phase and a swap request. During the simulation phase, the defender plays a \emph{set of states}:

\begin{figure}[t]
    \centering
    
    \begin{tikzpicture}%
      [
       shorten <=1pt,
       shorten >=1pt,
       node distance=2cm,
       auto,
       baseline={([yshift={-1ex}]current bounding box.north)}
      ]
    
    \tikzset{%
        square/.style={regular polygon,regular polygon sides=4, minimum size = 3.1cm, inner sep=0cm}
        }
        \tikzstyle{mycircle}=[circle, minimum size = 2.3cm, draw=black, inner sep=0cm]
    
        \node (Ta0) [square, draw]                                      {$\attackerPos{p, Q}$};
        \node (Td0) [mycircle, draw, above right = -5mm and 1.6cm of Ta0]  {$\CDefSimPos{a, p', Q}$};
        \node (Td1) [mycircle, draw, below right= -5mm and 1.6cm of Ta0]   {$\CDefSwapPos{p', Q}$};
        \node (Ta1) [square, draw, dashed, right= of Td0]               {$\attackerPos{p', Q'}$};
        \node (Ta2) [square, draw, dashed, right= of Td1]               {$\attackerPos{q', \{p'\}}$};
    
    \path[>->]
        (Ta0)       edge[bend left = 27]        node    {[$p \delay{a} p'$]}    (Td0)
        (Ta0)       edge[swap,bend right = 27]  node    {[$p \wtwt p'$]}        (Td1)
        
        (Td0)       edge                        node    {[$Q \delay{a} Q'$]}    (Ta1)
        (Td1)       edge[swap]                  node    {[$Q \wtwt q'$]}        (Ta2)
       ;
    \path[-, dashed]
    ;
    
    \end{tikzpicture}
    
    \caption{Schematic contrasimulation set game (Definition~\ref{def:contrasim-set-game}).}
    \label{fig:set_game}
\end{figure}
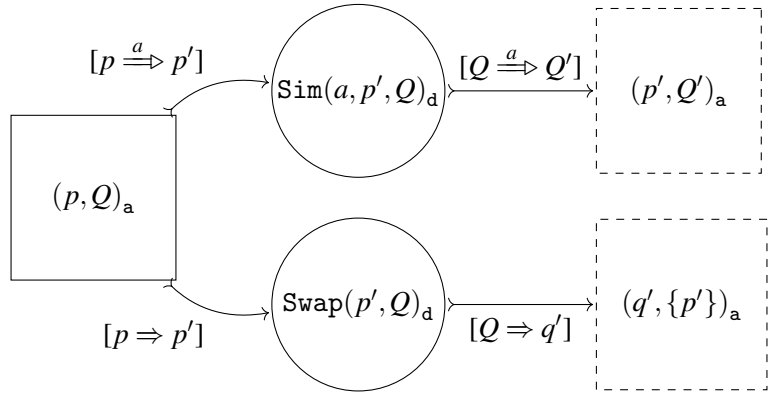

\begin{defi}[$\pre{C}$ game] \label{def:contrasim-set-game}
For a transition system $(S, \acttau, \trans{})$, the \emph{contrasimulation set game} $\GM[g_0] = (G, G_d, \gameMove, g_0)$ consists of 
%
\begin{itemize}
    \item \makebox[5cm][l]{attacker positions} \makebox[35mm][l]{$\attackerPos{p, Q} \in G_a$} with $p \in S, Q \subseteq S$,
    \item \makebox[5cm][l]{defender simulation positions} \makebox[35mm][l]{$\CDefSimPos{a, p, Q} \in G_d$} with $a \in Act, p \in S, Q \subseteq S$, and 
    \item \makebox[5cm][l]{defender swapping positions} \makebox[35mm][l]{$\CDefSwapPos{p, Q} \in G_d$} with $p \in S, Q \subseteq S$
\end{itemize}

and the following game moves

\begin{itemize}
    \item \makebox[4cm][l]{simulation challenges} \makebox[45mm][l]{$\attackerPos{p, Q} \gameMove \CDefSimPos{a, p', Q}$} if $p \delay{a} p'$,
    \item \makebox[4cm][l]{swap challenges} \makebox[45mm][l]{$\attackerPos{p, Q} \gameMove \CDefSwapPos{p', Q}$} if $p \wtwt p'$,
    \item \makebox[4cm][l]{simulation answers} \makebox[45mm][l]{$\CDefSimPos{a, p', Q} \gameMove \attackerPos{p', Q'}$} if $Q \delay{a} Q'$, and
    \item \makebox[4cm][l]{swap answers} \makebox[45mm][l]{$\CDefSwapPos{p', Q} \gameMove \attackerPos{q', \{p'\}}$} if $Q \wtwt q'$.\isbref{locale}{Contrasim_Set_Game}{c_set_game}
\end{itemize}
\end{defi} 

\noindent
To check whether $p \pre{C} q$ holds we play the contrasimulation set game from the initial attacker position $\attackerPos{p, \set{q}}$.
A schematic model of the game is given in Figure~\ref{fig:set_game}. 

In each simulation phase, the attacker challenges the defender to successively simulate the actions $w_i$ of a word $\vw = w_0 \ldots w_n$. Here, the defender does not play a single state, but rather the \emph{set} of \emph{all} states reachable by the known prefix $w_0 \ldots w_k$ of $\vw$. When the attacker is done choosing $\vw$ in $p \wwvt p'$, they request a swap. The defender must then select a \emph{specific} state $q'$ with $q \wwvt q'$ from their state set.
The players then change sides and the attacker may now challenge $q' \pre{C} p'$.
Hence, the defender postpones the decision of how exactly to simulate the challenged word until a swap is requested.


\begin{example}[Contrasimulation game on $\ccsIdentifier{P_c}$, $\ccsIdentifier{P_p}$]
A possible play of $\GM$ for the philosopher transition system from Example~\ref{exa:philosophers-dining} would be:
$\attackerPos{\ccsIdentifier{P_c}, \{\ccsIdentifier{P_p}\}}
\gameMove \CDefSimPos{\action{op}, \ccsIdentifier{AB}, \{\ccsIdentifier{P_p}\}}
\gameMove \attackerPos{\ccsIdentifier{AB}, \{ \tau \ccsPrefix \action{aEats}, \tau \ccsPrefix \action{bEats}\}}
\gameMove \CDefSwapPos{\ccsIdentifier{\tau \ccsPrefix \action{aEats}}, \{ \tau \ccsPrefix \action{aEats}, \tau \ccsPrefix \action{bEats}\}}
{\color{blue}\gameMove} \attackerPos{\ccsIdentifier{\tau \ccsPrefix \action{aEats}}, \{ \tau \ccsPrefix \action{aEats}\}}
\gameMove \CDefSimPos{\action{aEats}, \ccsStop, \{ \tau \ccsPrefix \action{aEats}\}}
{\gameMove} \attackerPos{\ccsStop, \{ \ccsStop \}}
\break \gameMove \CDefSwapPos{\ccsStop, \{ \ccsStop \}} \gameMove \dots$ .
The play ends in an infinite loop of swaps and is thus won by the defender. The crucial point of this game is the defender move highlighted in blue, where the defender answers a swap challenge by matching the processes on both sides. After this, it becomes impossible for the attacker to win. If the defender picked $\tau \ccsPrefix \action{bEats}$ instead, they would lose. However, the defender has a winning strategy no matter which moves the attacker chooses for $\GM[\attackerPos{\ccsIdentifier{P_c}, \{\ccsIdentifier{P_p}\}}]$.
\end{example}

\noindent
There are no easy ways of switching to single-action moves without using the subset construction:
The defender does not know the full word $\vw$ that the attacker will choose, but only the word prefix $w_0 \ldots w_k$ challenged thus far up to a point $k \leq n$.
Deciding for single states early would thus put the defender at a disadvantage: There might be several states $q'$ with $q \wvt{w_0 \ldots w_k} q'$ for every such $k$, of which only some also satisfy $q \wvt{w_0 \ldots w_k} q' \wvt{w_{k+1} \ldots w_n} q''$ for any $q'' \in S$.
Dually, forcing early swapping would be disadvantageous to the attacker when the attack has to pass through instable states as seen in Example~\ref{exa:instable-choice}.

Crucially, this construction yields a finite game for any finite-state process. As with the well-known subset construction when transforming nondeterministic into deterministic finite automata, the game size is exponential in the size of the state space $S$.

We chose to present the game as an alternating game where attacker and defender take turns. Several modifications could be made to simplify some aspects of the game: 
Note, for example, that the $\CDefSimPos{\ldots}$-positions are not strictly necessary, as the defender has exactly one move originating from each such position. Additionally, parts of the game moves could be broken up into smaller steps on $\trans{}$ instead of $\weakStepDelay{}$. However, both changes would make the game non-alternating. For the purpose of this paper, especially for intuitive proofs like in the following section, we consider the alternating formulation superior.








\section{Correctness of the Contrasimulation Game \label{sec:contrasim-game-correctness}}

Let us now demonstrate that the $\pre{C}$ game does indeed correspond to the contrasimulation preorder in the sense that the defender wins the game $\GM[\attackerPos{p, \set{q}}]$ precisely if $p \pre{C} q$. In Subsection~\ref{subsec:contraim-game-sound}, we first establish soundness of the characterization, that is, defender winning strategies in the game imply contrasimulations on the LTS. Subsection~\ref{subsec:contraim-game-complete} then shows completeness of the characterization by constructing a defender winning strategy from the greatest contrasimulation.

Our proofs must bridge the gap between the single-action game and the word-transition definition of the contrasimulation property. While transition relations on single actions usually are non-deterministic, the transitions lifted to sets of states \emph{are} deterministic. In order to exploit this in proofs, we first define the word successor function from delay steps:

\begin{defi}
\label{succ func} 
We define the word successor function $succs : \wact \times 2^S \rightarrow 2^S$ recursively as follows:\isbref{primrec}{Weak_Transition_Systems}{dsuccs_seq_rec}
\begin{alignat*}{2}
&succs(\eps, Q) &&= Q \\
&succs(\vw a, Q) &&= \{q' \mid succs(\vw, Q) \delay{a} q'\}
\end{alignat*}
\end{defi}

\noindent
Intuitively, $succs$ computes the set of states reachable with a given word $\vw$ from a starting set $Q$. Thus, we can use $succs$ to compute the state set of defender simulation positions in the contrasimulation set game.
Note that $succs(\vw, Q)$ will return the empty set if no state in $Q$ admits a $\wwvt$-transition.
\begin{lemma}
\label{in succs iff word-reachable}
Let $\vw \in \wact$ be a word and let $q, q'$ be states in $S$. Then $q \wwvt q'$ implies $succs(\vw, \{q\}) \wtwt q'$.\isbref{lemma}{Weak_Transition_Systems}{word_reachable_implies_in_dsuccs} Furthermore, $q' \in succs(\vw, \{q\})$ implies $q \wwvt q'$.\isbref{lemma}{Weak_Transition_Systems}{in_dsuccs_implies_word_reachable}
\end{lemma}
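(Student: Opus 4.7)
The plan is to prove both implications simultaneously by induction on the word $\vw$, matching the right-recursive structure of $succs$ to the decomposition of a weak word step at its final action.

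For the base case $\vw = \eps$, Definition~\ref{succ func} gives $succs(\eps, \{q\}) = \{q\}$, so $succs(\eps, \{q\}) \wtwt q'$ is literally $q \wtwt q'$, and by Definition~\ref{arrow notation} this coincides with $q \wepvt q'$. Both directions are immediate. In the inductive step I write $\vw$ as $\vu a$ for some prefix $\vu$ and $a \in \acttau$ and invoke the induction hypothesis for $\vu$.

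For the forward direction, from $q \wvt{\vu a} q'$ I peel off the final action using the definition of weak word steps to obtain an intermediate $q_1$ with $q \wvt{\vu} q_1 \whal q'$. Unfolding $\whal$ as $\delay{a}\wtwt$, I extract $q_2$ with $q_1 \delay{a} q_2 \wtwt q'$. The IH provides some $q_0 \in succs(\vu, \{q\})$ with $q_0 \wtwt q_1$, and because the definition of $\delay{a}$ absorbs a leading chain of $\wtwt$-steps, the composition $q_0 \wtwt q_1 \delay{a} q_2$ collapses to $q_0 \delay{a} q_2$. By the recursive clause of Definition~\ref{succ func} this places $q_2$ in $succs(\vu a, \{q\})$, and the trailing $q_2 \wtwt q'$ witnesses $succs(\vu a, \{q\}) \wtwt q'$ under the set-lifted notation of Definition~\ref{arrow notation}.

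For the backward direction, $q' \in succs(\vu a, \{q\}) = \{r \mid succs(\vu, \{q\}) \delay{a} r\}$ yields some $q'' \in succs(\vu, \{q\})$ with $q'' \delay{a} q'$. Since $\delay{a}$ is a special case of $\whal$ with an empty trailing $\wtwt$-chain, we have $q'' \whal q'$. The IH gives $q \wvt{\vu} q''$, and appending the final step produces $q \wvt{\vu a} q'$ by the definition of weak word steps.

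The one point requiring care is the absorption of preceding internal steps into $\delay{a}$ in the forward direction: the induction delivers $succs(\vu, \{q\}) \wtwt q_1$ only up to $\wtwt$-closure, not equality, so I rely on the fact that $q_0 \wtwt q_1 \delay{a} q_2$ collapses to $q_0 \delay{a} q_2$ because $\delay{a}$ explicitly permits a $\wtwt$-prefix. Once this alignment between $succs$, $\delay{a}$, and $\whal$ is recognized, everything else is a direct unfolding of the definitions.
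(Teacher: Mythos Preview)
Your induction on $\vw$ following the right-recursive shape of $succs$ is correct and is exactly the natural argument; the paper itself does not spell out a proof for this lemma in the text but only points to the Isabelle formalization, and an Isabelle proof of this shape would proceed by the same structural induction. One small slip: since $\vw \in \wact = \act^*$, the last letter $a$ lies in $\act$, not in $\acttau$ as you wrote; this does not affect your argument, because your use of $\delay{a}$ and $\whal$ already treats $a$ as a visible action throughout.
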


\subsection{Soundness of the Contrasimulation Game \label{subsec:contraim-game-sound}}

Let us first prove that the $\pre{C}$ game is sound with respect to the contrasimulation preorder, that is, $p \pre{C} q$ holds if the defender wins $\GM[\attackerPos{p, \set{q}}]$. To this end, we first prove an intermediate result stating that the defender is always able to answer simulation challenges over the prefix $\vecv$ of a nonempty word $\vw = \vecv w_n$ in $p \wwvt p'$.

\begin{lemma}[Word challenge building]
\label{succs in play}
Let $f$ be a defender strategy on $\GM[g_0]$ for some $g_0 \in G$ and let $\attackerPos{p, \{q\}}$ be an attacker position in a play consistent with $f$. Let $\vw = \vecv w_n \in \wact$ be a nonempty word and assume $p \wwvt p'$ for some $p' \in S$. Then there exist $p_0, p_1 \in S$ with $p \wvvt p_0 \delay{w_n} p_1 \wtwt p'$ such that the defender position $\normalfont{\CDefSimPos{w_n, p_1,\, succs(\vecv, \{q\})}}$ can be reached in some play consistent with $f$.%
\isbref{lemma}{Contrasim_Set_Game}{def_sim_pos_with_prefix_in_play}
\end{lemma}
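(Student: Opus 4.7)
The plan is to induct on the length of $\vecv$, leveraging that each defender simulation answer is essentially forced: since $Q \delay{a} Q'$ uniquely determines $Q'$ as the image of $Q$ under $\delay{a}$, every defender strategy must continue $\CDefSimPos{a, p', Q}$ into $\attackerPos{p', succs(a, Q)}$ (using the recursive clause of Definition~\ref{succ func}). Thus the attacker can drive the play letter by letter through $\vecv w_n$, with the set component automatically tracking $succs$ of the prefix consumed so far.

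For the base case $|\vecv| = 0$, the word reduces to $\vw = w_n$. Decompose $p \whtrans{w_n} p'$ into $p \delay{w_n} p_1 \wtwt p'$, set $p_0 := p$, and extend the existing play by the attacker simulation challenge $\attackerPos{p, \{q\}} \gameMove \CDefSimPos{w_n, p_1, \{q\}}$. Since $succs(\eps, \{q\}) = \{q\}$, this is the target position, and being an attacker move it is trivially consistent with any defender strategy $f$.

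For the inductive step, write $\vecv = \vu v_m$ and decompose $p \wwvt p' = p \wvt{u v_m w_n} p'$ into $p \wvt{u v_m} \tilde p' \whtrans{w_n} p'$ for some intermediate $\tilde p'$. Applying the induction hypothesis to the nonempty word $\vu v_m$ and final state $\tilde p'$ yields states $p_0', p_1'$ with $p \wuvt p_0' \delay{v_m} p_1' \wtwt \tilde p'$ and a play consistent with $f$ reaching $\CDefSimPos{v_m, p_1', succs(\vu, \{q\})}$. The defender's forced simulation answer continues this play into $\attackerPos{p_1', succs(\vecv, \{q\})}$, the set component updating by the recursive clause of Definition~\ref{succ func}. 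From $p_1' \wtwt \tilde p' \whtrans{w_n} p'$, unfolding $\whtrans{w_n} = \wtwt \trans{w_n} \wtwt$ for $w_n \in \act$ lets us extract $p_1$ with $p_1' \delay{w_n} p_1 \wtwt p'$. One final attacker simulation challenge then lands in $\CDefSimPos{w_n, p_1, succs(\vecv, \{q\})}$; setting $p_0 := p_1'$ gives the required decomposition $p \wvvt p_0 \delay{w_n} p_1 \wtwt p'$.

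The main obstacle is purely notational: tracking the decompositions $\whtrans{a} = \wtwt \trans{a} \wtwt = \delay{a} \wtwt$ and confirming that the simulation-phase set component really equals $succs$ of the consumed prefix. Lemma~\ref{in succs iff word-reachable} and the functional nature of $\delay{a}$ lifted to sets handle the set bookkeeping cleanly, so no genuine defender branching arises during the prefix simulation; only swap challenges (not exercised here) would give real choice.
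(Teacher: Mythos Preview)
The proposal is correct and follows essentially the same route as the paper's proof: induction on the length of the prefix $\vecv$, with the base case extending the given play by a single attacker simulation challenge, and the inductive step applying the hypothesis to the shorter word $\vecv = \vu v_m$, then using that the defender's simulation answer is uniquely determined (so $Q' = succs(\vecv,\{q\})$), followed by one more attacker challenge. Your emphasis on the \emph{forced} nature of the defender's simulation answer is exactly the mechanism the paper uses when it computes $Q' = succs(\vecv, \{q\})$; the only cosmetic differences are variable names and that you state the determinism of the $\CDefSimPos{\ldots}$ move upfront rather than deriving it in the step.
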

\begin{proof}
We will prove this by nonempty induction on $\vw = \vecv w_n \in \wact$: 
\begin{itemize}
    \item \textit{Base Case:} $\vw = w_n = \eps w_n$.
    
    From $p \wtrans{w_n} p'$ we know there exists a $p_1$ such that $p \delay{w_n} p_1 \wtwt p'$. Hence, the attacker can move from $\attackerPos{p, \{q\}}$ to $\CDefSimPos{w_n, p_1, \{q\}} = \CDefSimPos{w_n, p_1,\, succs(\eps, \{q\})}$. This play is still consistent with $f$ as $f$ is only defined for defender moves. 
    
    \item \textit{Induction step:} $\vw = \vecv w_n$ for some nonempty word $\vecv \in \wact$.
    
    Then there exists a $\vu \in \wact$ such that $\vecv = \vu w_{n-1}$ and $p \wuvt p_0 \wtrans{w_{n-1}} p_1 \wtrans{w_n} p'$ for some $p_0, p_1 \in S$. We assume the lemma holds for $\vecv$, i.e. there exists a $p_{01} \in S$ with $p_0 \delay{w_{n-1}} p_{01} \wtwt p_1$ such that the position $\CDefSimPos{w_{n-1}, p_{01},\, succs(\vu, \{q\})}$ can be reached in some play consistent with $f$.
    
    Because there always exists a (potentially empty) set $Q'$ with $succs(\vu, \{q\}) \delay{w_{n-1}} Q'$, the defender is not stuck at $\CDefSimPos{w_{n-1}, p_{01},\, succs(\vu, \{q\})}$. There must therefore exist a position $\attackerPos{p_{01}, Q'} = f(g_0 \ldots \CDefSimPos{w_{n-1}, p_{01},\, succs(\vu, \{q\})}$ that the defender can move to. For $Q'$ we have
    \[{Q' = \{q' \mid succs(\vu, \{q\}) \delay{w_{n-1}} q'\}} = succs(\vu w_{n-1}, \{q\}) = succs(\vecv, \{q\}).\]

From our assumptions $p_{01} \wtwt p_1$ and $p_1 \wtrans{w_n} p'$ we can infer the existence of a $p_1'$ with $p_{01} \wtwt p_1 \delay{w_n} p_1' \wtwt p'$, which we can shorten to $p_{01} \delay{w_n} p_1' \wtwt p'$. Hence, the attacker can move from $\attackerPos{p_{01}, Q'}$ to $\CDefSimPos{w_{n}, p_1', Q'} = \CDefSimPos{w_{n}, p_1',\, succs(\vecv, \{q\})}$, and therefore the defender position $\CDefSimPos{w_{n}, p_1',\, succs(\vecv, \{q\})}$ is reachable in a play consistent with $f$. Furthermore, the second implication $p \wvvt p_{01} \delay{w_n} p_1' \wtwt p'$ follows immediately from $p \wuvt p_0 \delay{w_{n-1}} p_{01} \delay{w_n} p_1' \wtwt p'$ and $\vecv = \vu w_{n-1}$.\qedhere

\end{itemize}
\end{proof}

\noindent
With this result, we are now able to prove the soundness of the $\pre{C}$ game. 

\begin{lemma}[Soundness]
\label{soundness_menge}
Let $f$ be a winning strategy for the defender on $\GM[\attackerPos{p_0, \set{q_0}}]$ for some $p_0, q_0 \in S$. Then we have $p_0 \pre{C} q_0$.
\isbref{lemma}{Contrasim_Set_Game}{set_contrasim_game_sound}
\end{lemma}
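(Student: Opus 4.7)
The plan is to read off a contrasimulation directly from defender winning strategies. Define
\[\rel{R} := \{(p, q) \mid \text{the defender wins } \GM[\attackerPos{p, \{q\}}]\}.\]
By assumption $(p_0, q_0) \in \rel R$, so it suffices to verify that $\rel R$ satisfies the contrasimulation property of Definition~\ref{contra def}. Fix $(p, q) \in \rel R$ with witnessing winning strategy $f$, and assume $p \wwvt p'$ for some $\vw \in \wact$. I have to exhibit a $q'$ with $q \wwvt q'$ and $(q', p') \in \rel R$.

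The idea is to let the attacker stage a word-then-swap attack against $f$: first drive the play through a simulation phase along $\vw$, then trigger a swap to pin down a concrete $q'$. In the nonempty case $\vw = \vecv w_n$, Lemma~\ref{succs in play} already gives a reachable defender position $\CDefSimPos{w_n, p_1,\, succs(\vecv, \{q\})}$ with $p_1 \wtwt p'$ in some play consistent with $f$. Because $f$ is winning, the defender is not stuck, so its prescribed simulation answer proceeds to $\attackerPos{p_1, Q'}$ with $Q' = succs(\vw, \{q\})$. The attacker then uses $p_1 \wtwt p'$ to move to $\CDefSwapPos{p', Q'}$, and $f$ must answer with some $\attackerPos{q', \{p'\}}$ satisfying $Q' \wtwt q'$. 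The degenerate case $\vw = \eps$ is simpler: the attacker plays the swap challenge $\attackerPos{p, \{q\}} \gameMove \CDefSwapPos{p', \{q\}}$ directly from $p \wtwt p'$, and $f$ prescribes an answer to $\attackerPos{q', \{p'\}}$ with $q \wtwt q'$.

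Two observations then close the argument. First, the attacker position $\attackerPos{q', \{p'\}}$ reached in this manner lies in $\rel R$: the suffix of $f$ from this point on, reindexed along the play prefix, is itself a defender winning strategy from $\attackerPos{q', \{p'\}}$, by the standard transfer of winning strategies along consistent plays. Second, one recovers $q \wwvt q'$ from the swap condition: in the nonempty case $Q' \wtwt q'$ unfolds to some $q'' \in Q' = succs(\vw, \{q\})$ with $q'' \wtwt q'$, and Lemma~\ref{in succs iff word-reachable} turns $q'' \in succs(\vw, \{q\})$ into $q \wwvt q''$, so that the trailing $\tau$-steps yield $q \wwvt q'$; in the empty-word case the condition is immediate.

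I expect the main subtlety to be not mathematical depth but careful bookkeeping: splitting off the trailing action $w_n$ of $\vw$ so that Lemma~\ref{succs in play} applies, matching the defender-chosen set with $succs(\vw, \{q\})$, and handling the transfer of the history-dependent winning strategy $f$ to its continuation from $\attackerPos{q', \{p'\}}$. All of these are routine once the case split between $\vw = \eps$ and $\vw = \vecv w_n$ is made, which is why the attack is phrased as a single simulation phase followed by a single swap.
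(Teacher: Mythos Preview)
Your argument is correct and follows the same skeleton as the paper's proof: build up the word via Lemma~\ref{succs in play}, let the winning strategy answer the final simulation challenge to reach $succs(\vw,\{q\})$, issue a swap, and use Lemma~\ref{in succs iff word-reachable} to recover $q \wwvt q'$. The empty/nonempty case split is also the same.

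The one genuine difference is your choice of candidate relation. The paper fixes the single strategy $f$ from the hypothesis and takes $\rel R$ to be $\{(p_0,q_0)\}$ together with all pairs arising from attacker positions $\attackerPos{q,\{p\}}$ reached \emph{right after a swap} in plays consistent with that one $f$. Membership of $(q',p')$ in $\rel R$ then holds by construction, with no further argument needed. You instead take $\rel R$ to be the full defender-winning region $\{(p,q)\mid \text{defender wins }\GM[\attackerPos{p,\{q\}}]\}$, which obliges you to invoke the transfer of winning strategies along consistent play prefixes to conclude $(q',p')\in\rel R$. That transfer step is routine, and in exchange your relation is positional and strategy-independent; the paper's relation avoids the transfer lemma entirely but is tied to the particular history-dependent $f$.
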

\begin{proof}
We construct a relation $\rel{R} \subseteq S \times S$ where 
\[
\rel{R} = \{(p_0, q_0)\} \cup \{(q, p) \mid \exists g \in \playsConsFInit{f} : \exists k \in \mathbb{N} : \exists Q \subseteq S : g_k = \CDefSwapPos{p, Q} \land g_{k+1} = \attackerPos{q,\{p\}}\}.
\]
Informally, $\rel{R}$ contains the states $p_0, q_0$ of the initial position $\attackerPos{p_0, \set{q_0}}$ and the states of all attacker positions following a defender swap position in any play consistent with $f$. We aim to prove that $\rel{R}$ is a contrasimulation: 

Let $p, p', q \in S$ be states and assume $(p, q) \in \rel{R}$ and $p \wwvt p' $ for some $\vw \in Act^*$. 
We shall prove that a state $q' \in S$ exists such that $q \wwvt q'$ and $(q', p') \in \rel{R}$.

Since $(p,q) \in \rel{R}$, there exists a $g \in \playsConsFInit{f}$ and a $k \in \mathbb{N} \cup \{-1\}$ such that $g_{k+1}= \attackerPos{p,\{q\}}$ is an attacker position in $g$. We will distinguish between empty and nonempty words $\vw$:

\begin{itemize}
    \item Case 1: $\vw = \eps$.
    
    By assumption, we have $p \wepvt p'$ and thus $p \wtwt p'$. Hence, the attacker can move from $\attackerPos{p,\{q\}}$ to a defender position $\CDefSwapPos{p', \{q\}}$. Because $f$ is a winning strategy, the defender is not stuck at $\CDefSwapPos{p', \{q\}}$; there must therefore exist a state $q' \in S$ such that the defender can move to the position $f(g_0 \ldots \CDefSwapPos{p', \{q\}}) = \attackerPos{q',\{p'\}}$. 
    It follows that $q \wtwt q'$ and that there exists a play $g \in \playsConsFInit{f}$ in which $\attackerPos{q',\{p'\}}$ can be reached. Since the last position played is a $\CDefSwapPos{\ldots}$ node, we therefore have $(q', p') \in \rel{R}$ by our construction of $\rel{R}$.
    
    \item Case 2: $\vw = \vecv w_n$ for some $\vecv \in \wact$ and $w_n \in Act$. 
    
    By application of Lemma \ref{succs in play}, we know there exist $p_0, p_1$ with $p \wvvt p_0 \delay{w_n} p_1 \wtwt p'$ such that the position $\CDefSimPos{w_n, p_1,\, succs(\vecv, \{q\})}$ can be reached in some play $g \in \playsConsFInit{f}$. Because $f$ is a winning strategy, the defender is not stuck at $\CDefSimPos{w_n, p_1,\, succs(\vecv,\{q\}}$; there must therefore exist a set $Q_1$ allowing the defender to move to a position $f(g_0 \ldots \CDefSimPos{w_n, p_1,\, succs(\vecv,\{q\})}) = \attackerPos{p_1, Q_1}$. For $Q_1$ we have
    \[Q_1 = \{q_1 \mid succs(\vecv, \{q\}) \delay{w_{n}} q_1\} = succs(\vecv w_{n}, \{q\}) = succs(\vw, \{q\}).\]

    By our assumption $p_1 \wtwt p'$, the attacker can keep moving to a defender position $\CDefSwapPos{p', Q_1}$. Again, the defender is not stuck, and there exists a $q' \in S$ allowing the defender to move to the position $f(g_0 \ldots \CDefSwapPos{p', Q_1}) = \attackerPos{q', \{p'\}}$. It follows that $Q_1 \wtwt q'$.
    For $q'$ we have
    \begin{align*}
            q' &\in \{q' \mid \exists q_1 \in Q_1 : q_1 \wtwt q'\} \\
            &= \{q' \mid \exists q_1 \in succs(\vw, \{q\}) : q_1 \wtwt q'\}\\
            &\subseteq \{q' \mid \exists q_1 \in succs(\vw, \{q\}) : q \wwvt q_1 \land q_1 \wtwt q'\} && \text{(Lemma \ref{in succs iff word-reachable})}\\
            &\subseteq \{q' \mid q \wwvt q'\}.
    \end{align*}
    Hence, there exists a $q'$ satisfying $q \wwvt q'$. Because the defender moves according to $f$ from a position $\attackerPos{p,\{q\}}$ in a play $g \in \playsConsFInit{f}$, the position $\attackerPos{q', \set{p'}}$ must also be reachable in a play consistent with $f$. Then we also have $(q', p') \in \rel{R}$, since the last played position was a $\CDefSwapPos{\ldots}$ node. 
    
\end{itemize}
Thus, $\rel{R}$ is a contrasimulation by Definition \ref{contra def}. From $(p_0, q_0) \in \rel{R}$ then follows $p_0 \pre{C} q_0$.
\end{proof}

\subsection{Completeness of the Contrasimulation Game \label{subsec:contraim-game-complete}}

Let us now prove that the $\pre{C}$ game is complete with respect to the contrasimulation preorder, that is, the defender wins $\GM[\attackerPos{p, \set{q}}]$ if $p \pre{C} q$. To this end, we define an auxiliary function $F$ with which we are able to construct a defender strategy $\strat$ from the contrasimulation preorder $\pre{C}$. We will prove that if $p \pre{C} q$, then $\strat$ is a winning strategy for the defender on $\GM[\attackerPos{p, \set{q}}]$. 

We define the function $F : 2^{S\times 2^S} \rightarrow 2^{S\times 2^S}$ as follows:\isbref{definition}{Contrasimulation}{F}
\[F(R) = \{(p',\; succs(\vw, Q))\mid \exists p \in S: (p, Q) \in R \land p \wwvt p'\}
\]

\noindent
Furthermore, we define a type-congruent relation $C \subseteq S \times 2^S$ from the contrasimulation preorder with $C = \{(p, \{q\}) \mid p \pre{C} q\}$.

This yields a relation $F(C) \subseteq S \times 2^S$ which we will use extensively in the following proofs. 
The motivation behind $C$ and $F(C)$ becomes clear when we consider a play in the game $\GM[\attackerPos{p_0, \set{q_0}}]$ with $p_0 \pre{C} q_0$:
Here, $C$ contains tuples $(p, \set{q})$ of attacker positions after the word challenge has been completed, i.e. positions $\attackerPos{p, \set{q}}$ following a $\CDefSwapPos{\ldots}$ node. The relation $F(C)$ then expands $C$ to also include attacker positions in the \emph{simulation phase} of the word challenge, i.e. positions $\attackerPos{p, Q}$ following a $\CDefSimPos{\ldots}$ node. 
Thus, using $F(C)$ we can construct a defender strategy that is well-defined for both $\CDefSimPos{\ldots}$ and $\CDefSwapPos{\ldots}$ positions.

%
%
\begin{lemma}
We have $R \subseteq F(R)$ for all $R \subseteq S \times 2^S$.\isbref{lemma}{Contrasimulation}{R_is_in_F_of_R}
\end{lemma}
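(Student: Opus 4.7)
The plan is to prove the inclusion by an elementwise argument that exploits the empty-word case of the successor function. Fix an arbitrary pair $(p,Q)\in R$ and aim to exhibit witnesses that place it in $F(R)$ according to its defining formula.

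The natural choice is to take $\vw = \eps$ and to use $p$ itself as the witness on both sides. On the one hand, by Definition~\ref{succ func} we have $succs(\eps, Q) = Q$, so the second component is already $Q$. On the other hand, $p \wepvt p$ holds trivially, since $\wepvt$ denotes $\wtwt$ and the reflexive-transitive closure of $\trans{\tau}$ includes the zero-step path. Together these give a $p \in S$ with $(p, Q) \in R$ and $p \wwvt p$ for $\vw = \eps$, so the tuple $(p, succs(\eps, Q)) = (p, Q)$ is an element of $F(R)$ by the definition of $F$.

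Since $(p,Q)\in R$ was arbitrary, this establishes $R \subseteq F(R)$. There is no real obstacle here: the only thing to watch out for is to make sure that the empty word is admissible in the quantifier ranging over $\vw \in \wact$ (it is, by the conventions set in Definition~\ref{arrow notation}, where $p \wvt{\eps} p'$ is defined to mean $p \wtwt p'$), and that $succs(\eps, Q)$ is indeed $Q$ (the base case of Definition~\ref{succ func}). Both are immediate from the earlier definitions, so the proof is essentially a one-line unfolding of $F$.
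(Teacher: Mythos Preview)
Your proof is correct and is exactly the natural argument: instantiate $\vw=\eps$ and $p'=p$, use $succs(\eps,Q)=Q$ and reflexivity of $\wtwt$. The paper itself omits the proof entirely (the lemma is stated with only an Isabelle reference and no \texttt{proof} environment), so there is nothing to compare beyond noting that your unfolding is the obvious one-liner the authors left implicit.
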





\begin{lemma}
\label{F of C action or tau succ}

Let $a \in Act$ and let $p, p' \in S$ be states and $Q \subseteq S$ be a set of states such that $(p, Q) \in F(C)$.
\begin{itemize}
    \item If $p \delay{a} p'$, then there exists a set $Q' \subseteq S$ such that $Q \delay{a} Q'$ and $(p', Q') \in F(C)$,\isbref{lemma}{Contrasimulation}{F_of_C_guarantees_action_succ} and
    \item if $p \wtwt p'$, then there exists a state $q' \in S$ such that $Q \wtwt q'$ and $(q', \{p'\}) \in F(C).$\isbref{lemma}{Contrasimulation}{F_of_C_guarantees_tau_succ}
\end{itemize}
\end{lemma}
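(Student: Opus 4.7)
The plan is to unpack the definition of $F(C)$ and then reduce both parts to properties already established: the fact that $\pre{C}$ is itself a contrasimulation and the correspondence between $succs$ and $\wwvt$ given by Lemma~\ref{in succs iff word-reachable}. By the defining formula of $F$ and the shape of $C$, the hypothesis $(p, Q) \in F(C)$ means precisely that there exist $\tilde{p}, q_0 \in S$ and a word $\vw \in \wact$ with $\tilde{p} \pre{C} q_0$, $\tilde{p} \wwvt p$, and $Q = succs(\vw, \{q_0\})$. In both claims I will extend this witness data to fit $p'$.

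For the first bullet, given $p \delay{a} p'$, I would extend the word to $\vw a$. Composing $\tilde{p} \wwvt p$ with the delay step yields $\tilde{p} \wtrans{\vw a} p'$, and I set $Q' := succs(\vw a, \{q_0\})$. Unfolding Definition~\ref{succ func} gives $Q' = \{q' \mid succs(\vw, \{q_0\}) \delay{a} q'\} = \{q' \mid Q \delay{a} q'\}$, so $Q \delay{a} Q'$ follows immediately from the lifting convention of Definition~\ref{arrow notation}. With witnesses $\tilde{p}$, $q_0$, and $\vw a$ this shows $(p', Q') \in F(C)$.

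For the second bullet, given $p \wtwt p'$, I would absorb the internal steps into the trailing $\wtwt$ of the last letter of $\tilde{p} \wwvt p$ (or directly if $\vw = \eps$) to obtain $\tilde{p} \wwvt p'$ with the same word $\vw$. Since $\pre{C}$ is a contrasimulation, applying Definition~\ref{contra def} to $\tilde{p} \pre{C} q_0$ with the step $\tilde{p} \wwvt p'$ furnishes some $q_0'$ with $q_0 \wwvt q_0'$ and $q_0' \pre{C} p'$. By Lemma~\ref{in succs iff word-reachable} this gives $Q = succs(\vw, \{q_0\}) \wtwt q_0'$, and $(q_0', \{p'\}) \in C \subseteq F(C)$ by the inclusion of $C$ into $F(C)$ established in the preceding lemma. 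Choosing $q' := q_0'$ finishes the case.

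The only mildly delicate step is the manipulation of weak word steps at the beginning of each case, namely the composition $\tilde{p} \wwvt p \,\delay{a}\, p' \Rightarrow \tilde{p} \wtrans{\vw a} p'$ and the absorption $\tilde{p} \wwvt p \wtwt p' \Rightarrow \tilde{p} \wwvt p'$. Both are standard closure properties of $\wwvt$, but they do require splitting on whether $\vw$ is empty (in which case $\tilde{p} \wwvt p$ collapses to a pure $\wtwt$ chain) — this is the bookkeeping I would expect to take the most care in a full write-up.
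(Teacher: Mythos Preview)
Your argument is correct and follows the paper's proof essentially verbatim: unpack $(p,Q)\in F(C)$ to obtain witnesses $p_0, q_0, \vw$ with $p_0 \pre{C} q_0$, $p_0 \wwvt p$, $Q = succs(\vw,\{q_0\})$, then extend the word by $a$ for the first bullet and absorb the trailing $\wtwt$ for the second, appealing to the contrasimulation property of $\pre{C}$ and Lemma~\ref{in succs iff word-reachable} in the same places. The one small difference is that in the first bullet the paper also invokes $p_0 \pre{C} q_0$ to produce a $q'$ with $q_0 \wtrans{\vw a} q'$ and hence $succs(\vw a,\{q_0\}) \wtwt q'$, effectively certifying that $Q'$ is nonempty; you omit this, which is fine since neither $Q \delay{a} Q'$ nor $(p',Q')\in F(C)$ requires it for the lemma as stated.
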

\begin{proof}

By construction of $F$ and our assumption $(p, Q) \in F(C)$, we know there exist states $p_0, q_0 \in S$ such that $(p_0, \{q_0\}) \in C$, $p_0 \wwvt p$ and $Q = succs(\vw, \{q_0\})$ for some $\vw \in \wact$. Hence, we also have $p_0 \pre{C} q_0$. We will distinguish between $\tau$-steps and delay steps: 

\begin{itemize}

        \item For $p \delay{a} p'$, we have $p_0 \wwvt p \delay{a} p'$ and thus $p_0 \wtrans{\vw a} p'$. With $p_0 \pre{C} q_0$, there must then exist a $q' \in S$ such that $q_0 \wtrans{\vw a} q'$. It follows from Lemma~\ref{in succs iff word-reachable} that $succs(\vw a, \{q_0\}) \wtwt q'$. For $Q' := succs(\vw a, \{q_0\})$, we then have $(p', Q') \in F(C)$ by our construction of $F$ and $Q \delay{a} Q'$ by Definition~\ref{succ func}.
        
        \item For $p \wtwt p'$, we have $p_0 \wwvt p \wtwt p'$ and thus $p_0 \wwvt p'$. With $p_0 \pre{C} q_0$, there must then exist a $q' \in S$ such that $q_0 \wwvt q'$ and $q' \pre{C} p'$. Thus, we have $(q', \{p'\}) \in C \subseteq F(C)$ and $q' \in succs(\vw, \{q_0\}) = Q$ by application of Lemma~\ref{in succs iff word-reachable}.  It follows immediately that $Q \wtwt q'$. \qedhere
        
\end{itemize}
\end{proof}
\noindent
We can now construct a positional defender strategy $\strat$ using $F(C)$. 

\begin{defi}[Defender strategy $\strat$ on $\GM$]

Wherever possible, a defender strategy $\strat$ derived from $F(C)$ maps the current play fragment $g_0 \ldots g_k$ to next positions as follows:
\begin{itemize}
    \item If the last played position $g_k$ is a swap node $\CDefSwapPos{p', Q}$, move to some attacker position $\attackerPos{q', \set{p'}}$ with $(q', \set{p'}) \in F(C)$ and $Q \wtwt q'$, and
    \item if the last played position $g_k$ is a simulation node $\CDefSimPos{a, p', Q}$, move to the attacker position $\attackerPos{p', Q'}$ where $Q \delay{a} Q'$.\isbref{fun}{Contrasim_Set_Game}{strategy_from_F_of_C}
\end{itemize}
\end{defi}

\noindent
Where there are no applicable moves, we leave $\strat$ undefined. Note that there may exist several strategies satisfying these conditions. In the following, $\strat$ is one of these defender strategies---it makes no difference which one.\footnote{In our Isabelle/HOL formalization, we used its Hilbert's choice operator \texttt{SOME}.} 

\begin{lemma}[$F$ invariant]
\label{f retains zwischenrel}
Let $p_0, q_0 \in S$ be states and let $g$ be a play of $\GM[\attackerPos{p_0, \set{q_0}}]$ consistent with $\strat$. If $p_0 \pre{C} q_0$, then we have $(p, Q) \in F(C)$ for all attacker positions ${\attackerPos{p, Q}}$ in $g$.\isbref{lemma}{Contrasim_Set_Game}{set_game_strategy_retains_F}
\end{lemma}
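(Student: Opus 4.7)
The plan is to prove the invariant by induction along the play $g$, more precisely by induction on the index of attacker positions that occur in $g$. The base case is trivial: the only attacker position at index $0$ is $\attackerPos{p_0, \set{q_0}}$, and from $p_0 \pre{C} q_0$ we get $(p_0, \set{q_0}) \in C$, hence $(p_0, \set{q_0}) \in F(C)$ by the preceding lemma that $R \subseteq F(R)$.

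For the inductive step, suppose the $k$-th attacker position $\attackerPos{p, Q}$ in $g$ satisfies $(p, Q) \in F(C)$, and let the next attacker position in $g$ be $\attackerPos{p^\star, Q^\star}$, reached via one attacker move to a defender position, followed by one defender move determined by $\strat$. The attacker move is either a simulation challenge to $\CDefSimPos{a, p', Q}$ with $p \delay{a} p'$, or a swap challenge to $\CDefSwapPos{p', Q}$ with $p \wtwt p'$. In the simulation case, Lemma~\ref{F of C action or tau succ} yields a $Q'$ with $Q \delay{a} Q'$ and $(p', Q') \in F(C)$; since the set lifting of $\delay{a}$ is deterministic (the unique $Q'$ is $\set{q' \mid Q \delay{a} q'}$), the strategy $\strat$ is defined at this node and moves to exactly this $\attackerPos{p', Q'}$, so $(p^\star, Q^\star) = (p', Q') \in F(C)$. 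In the swap case, Lemma~\ref{F of C action or tau succ} supplies some $q'$ with $Q \wtwt q'$ and $(q', \set{p'}) \in F(C)$; since at least one such $q'$ exists, $\strat$ is defined at $\CDefSwapPos{p', Q}$ and, by construction, picks some $q'$ with $(q', \set{p'}) \in F(C)$, so again $(p^\star, Q^\star) \in F(C)$ by consistency of $g$ with $\strat$.

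The main obstacle I anticipate is purely bookkeeping rather than mathematical: one has to argue carefully that $\strat$ is in fact defined on every defender position appearing in $g$ (otherwise consistency with $\strat$ would not force the next move), and this is precisely what the two clauses of Lemma~\ref{F of C action or tau succ} guarantee given the induction hypothesis $(p,Q) \in F(C)$. A small additional subtlety in the simulation case is to observe that although the definition of $\strat$ only asserts that $Q \delay{a} Q'$, the uniqueness of the lifted set transition pins $Q'$ down, matching the $Q'$ provided by the lemma and thereby delivering $(p', Q') \in F(C)$. No further ingredients beyond these two facts and the already-established $R \subseteq F(R)$ should be needed.
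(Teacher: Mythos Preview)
Your proposal is correct and follows essentially the same approach as the paper: induction along the play, with the base case handled by $C \subseteq F(C)$ and the inductive step by observing that every non-initial attacker position is reached via a defender move governed by $\strat$. The paper compresses this into two sentences and simply asserts that ``following $\strat$, the defender always moves in accordance with $F(C)$''; you spell out the simulation case more carefully by invoking Lemma~\ref{F of C action or tau succ} and the determinism of the set-lifted $\delay{a}$, which is exactly the content hidden behind the paper's terse phrasing (since the simulation clause of $\strat$ does not itself mention $F(C)$).
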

\begin{proof}
This follows immediately from $C \subseteq F(C)$ for the initial position $\attackerPos{p_0, \set{q_0}}$. All other attacker positions can only be reached if the defender moves to them, and, following $\strat$, the defender always moves in accordance with $F(C)$.
\end{proof} 


\begin{lemma}[Completeness]
\label{set game complete}
Let $p_0, q_0 \in S$ be states with $p_0 \pre{C} q_0$. Then $\strat$ is a winning strategy on $\GM[\attackerPos{p_0, \set{q_0}}]$.\isbref{lemma}{Contrasim_Set_Game}{set_contrasim_game_complete}
\end{lemma}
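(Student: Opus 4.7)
The plan is to show that, under the assumption $p_0 \pre{C} q_0$, the defender following $\strat$ never gets stuck in any play starting from $\attackerPos{p_0, \set{q_0}}$. By Definition~\ref{def:plays-wins}, a play consistent with $\strat$ is then either infinite or ends in an attacker position, both of which are won by the defender.

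I would fix an arbitrary play $g = g_0 g_1 \ldots \in \playsConsFInit{\strat}$ and argue that whenever $g_k$ is a defender position, the move prescribed by $\strat$ is well-defined. The crucial bridge is the $F$-invariant (Lemma~\ref{f retains zwischenrel}): every attacker position $\attackerPos{p, Q}$ appearing in $g$ satisfies $(p, Q) \in F(C)$. With this in hand, I would do a case analysis on the defender position reached.

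In the simulation case $g_k = \CDefSimPos{a, p', Q}$, the previous position $g_{k-1} = \attackerPos{p, Q}$ must be an attacker position from which a simulation challenge $p \delay{a} p'$ was issued. The $F$-invariant gives $(p, Q) \in F(C)$, so the first clause of Lemma~\ref{F of C action or tau succ} yields a set $Q'$ with $Q \delay{a} Q'$ and $(p', Q') \in F(C)$. Hence $\strat$ can (and does) move to $\attackerPos{p', Q'}$. Analogously, in the swap case $g_k = \CDefSwapPos{p', Q}$, the previous position is $\attackerPos{p, Q}$ with $p \wtwt p'$; the $F$-invariant combined with the second clause of Lemma~\ref{F of C action or tau succ} provides a state $q'$ with $Q \wtwt q'$ and $(q', \set{p'}) \in F(C)$, so $\strat$ moves to $\attackerPos{q', \set{p'}}$.

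In both cases the defender has a legal move available, so no finite play consistent with $\strat$ ends in a stuck defender position. Hence every play in $\playsConsFInit{\strat}$ is won by the defender, proving that $\strat$ is a winning strategy. The main conceptual obstacle was already absorbed into Lemma~\ref{F of C action or tau succ} and the $F$-invariant; what remains here is just combining them with the two kinds of defender moves, so the proof is essentially a verification rather than a new construction.
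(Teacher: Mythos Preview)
Your proposal is correct and follows essentially the same approach as the paper: both arguments show the defender is never stuck by combining the $F$-invariant (Lemma~\ref{f retains zwischenrel}) with Lemma~\ref{F of C action or tau succ} in a case split on the two defender-position types. The only cosmetic difference is that the paper phrases it as an induction on the play prefix $g_0\ldots g_k$, whereas you directly invoke the $F$-invariant for the preceding attacker position; since Lemma~\ref{f retains zwischenrel} already absorbs that induction, your presentation is equivalent.
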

\begin{proof} 
We show by induction on $g \in \playsConsFInit{\strat}$ that the defender is never stuck. Let $g = g_0g_1 \ldots g_k$ be the current play fragment on $\GM[\attackerPos{p_0, \set{q_0}}]$.

At the initial position $g_0 = \attackerPos{p_0, \set{q_0}}$, the defender cannot be stuck, since $g_0$ is an attacker position. We will assume the lemma holds for the current play fragment $g_0...g_k$. For $g_{k+1}$ we have the cases:

\begin{itemize}
    \item $g_{k+1}$ is an attacker position. Then the defender cannot be stuck by the same reasoning as for the base case. 

    \item $g_{k+1}$ is a defender position.
    Then there exists a position $g_k = {\attackerPos{p, Q}}$ from which the attacker has moved to $g_{k+1}$, therefore ($p, Q) \in F(C)$ must hold by Lemma~\ref{f retains zwischenrel}.
    We will distinguish between types of defender positions for $g_{k+1}$: 
    \begin{itemize}
        \item $g_{k+1} = \CDefSimPos{a, p', Q}$ is a simulation position. Then we have $p \delay{a} p'$. It follows from ${(p, Q) \in F(C)}$ and Lemma~\ref{F of C action or tau succ} that there is a set $Q' \subseteq S$ such that $Q \delay{a} Q'$ and $(p', Q') \in F(C)$. Thus, the defender can move to $\attackerPos{p', Q'}$ with $\strat$ and is therefore not stuck.
        
        \item $g_{k+1} = \CDefSwapPos{p', Q}$ is a swapping position. Then we have $p \wtwt p'$. From $(p, Q) \in F(C)$ and Lemma~\ref{F of C action or tau succ}, it follows that there exists a state $q' \in S$ such that $Q \wtwt q'$ and $(q',\{p'\}) \in F(C)$. Thus, the defender can move to $\attackerPos{q', \{p'\}}$ with $\strat$ and is therefore not stuck.
  \end{itemize}
\end{itemize}
Hence, the defender is never stuck in a play consistent with $\strat$, and thus $\strat$ is a winning strategy for the defender.
\end{proof}

\noindent
Combining Lemma~\ref{soundness_menge} and Lemma~\ref{set game complete}, we get:

\begin{theorem}
The defender wins $\GM[\attackerPos{p, \set{q}}]$ precisely if $p \pre{C} q$.\isbref{theorem}{Contrasim_Set_Game}{winning_strategy_in_set_game_iff_contrasim}
\end{theorem}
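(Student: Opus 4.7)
The plan is to observe that this theorem is just the biconditional packaging of the two preceding results on the $\pre{C}$ game, so my proof will be a short case split on the two directions of ``precisely if,'' each dispatched by citing one of Lemma~\ref{soundness_menge} and Lemma~\ref{set game complete}. The only semantic glue needed is Definition~\ref{def:strategies}, under which ``the defender wins $\GM[\attackerPos{p, \set{q}}]$'' is equivalent to ``there exists a defender winning strategy on $\GM[\attackerPos{p, \set{q}}]$,'' which matches the hypothesis/conclusion shape of both lemmas.

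For the direction that assumes $p \pre{C} q$, the plan is to invoke Lemma~\ref{set game complete} with $p_0 := p$ and $q_0 := q$: that lemma constructively produces the strategy $\strat$ derived from $F(C)$ and certifies it as a winning strategy for the defender on $\GM[\attackerPos{p, \set{q}}]$, so by Definition~\ref{def:strategies} the defender wins. For the converse direction, assume the defender wins $\GM[\attackerPos{p, \set{q}}]$; then by Definition~\ref{def:strategies} there is some defender winning strategy $f$ on this game, and Lemma~\ref{soundness_menge} immediately yields $p \pre{C} q$ (the witness relation $\rel R$ built in that proof is exactly what gives the contrasimulation containing $(p, q)$).

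Since both directions are entirely handed to us by the preceding lemmas, I do not expect any real obstacle here; in particular, no new game-theoretic or transition-system reasoning is required. The one small thing to be careful about is that the soundness and completeness lemmas were phrased for arbitrary $p_0, q_0 \in S$, so specializing them to $p_0 := p$, $q_0 := q$ is legitimate and requires no side conditions. Thus the whole proof will amount to two lines plus an appeal to Definition~\ref{def:strategies}, and the real mathematical content lives entirely in Lemma~\ref{succs in play}, Lemma~\ref{F of C action or tau succ}, and the construction of $\strat$ that have already been established.
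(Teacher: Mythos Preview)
Your proposal is correct and matches the paper's own proof, which simply states that the theorem follows by combining Lemma~\ref{soundness_menge} and Lemma~\ref{set game complete}. No additional reasoning is needed.
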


\section{Discussion and Related Work \label{sec:related}}

The presented game closes the last interesting gap in the \emph{landscape of game characterizations for equivalences in the linear-time--branching-time spectrum with internal steps}~\cite{van1993linear}. De Frutos Escrig et al.'s games for branching, delay, $\eta$ and weak bisimulation~\cite{escrig_games_2017} and ours for coupled simulation~\cite{bisping2020coupled32} are polynomial in the size of the system state space. Due to the subset construction, the presented contrasimulation game needs exponentially many game positions. Compared to defining the equivalence games on words as Chen and Deng~\cite{ramalingam_game_2008}, the subset construction is still preferable, as it yields finite games for finite-state systems. Our group has used the same subset approach in~\cite{bn2021ltbtspectroscope}. There, the virtue of the single-action construction lay in making attacker strategies correspond closely with all relevant distinguishing Hennessy--Milner logic formulas.

A side-effect of our work is that we have formalized contrasimilarity (and its characterizing games) in \emph{Isabelle/HOL}, based on our prior work in~\cite{bisping2019isabelle}. This is not the first such formalization. Peters and van Glabbeek~\cite{pg15encodability,pg15afp} provided an Isabelle theory of contrasimilarity tailored to reduction semantics and to the analysis of encodings between formalisms. Moreover, Bell~\cite{bell_certifiably_2013} developed a Coq formalization to support the verification of compilers using contrasimilarity, still available at \url{http://people.csail.mit.edu/cj/par/}.

Voorhoeve and Mauw~\cite{voorhoeve_impossible_2001} examined a \emph{modal-logical characterization} of contrasimilarity consisting of $\top$, $\bot$, conjunction $\varphi \land \psi$, disjunction $\varphi \lor \psi$ and a special necessity operator on paths with immediate negation $\Box_W \neg \varphi$, which is true at $p$ iff $\varphi$ is true at all $p'$ with $p \wvt{w} p'$ and $\vec{w} \in W$.
Although this paper has not been about modal logic, there is a strong link between games and modal logic. One can read our game in Figure~\ref{fig:set_game} as a way of enumerating possible distinguishing Hennessy--Milner logic (HML) formulas (with $\hmlObs{\varepsilon}$ denoting points of possible internal behavior). Then, the $\texttt{Sim}$-branch corresponds to a delayed observation $\hmlObs{\varepsilon}\hmlObs{a}\varphi$ and the $\texttt{Swap}$-branch to a delayed logical nor $\hmlObs{\varepsilon}\neg(\varphi_1 \lor \varphi_2 \lor \ldots)$.\footnote{For more on why focusing the defender on one position is conjunction and swapping sides is negation (which together amounts to NOR), see our group's recent paper on linear-time--branching-time spectroscopy \cite{bn2021ltbtspectroscope}.} For instance, a distinguishing formula for the non-contrasimilar systems of Example~\ref{exa:philosophers-dining-late} would be $\hmlObs{\varepsilon}\neg\hmlObs{\varepsilon}\hmlObs{\action{op}}\hmlObs{\varepsilon}\hmlObs{\action{aEats}}$.

Observation $\hmlObs{a}\varphi$ and logical nor $\neg(\varphi_1 \lor \varphi_2 \lor \ldots)$ form a functionally complete set of operators for HML and thus characterize (strong) bisimilarity. So it is aesthetically pleasing that “weakening” this observation language by alternating its constructs and $\hmlObs{\varepsilon}$, thereby allowing internal behavior “in between the connectives” leads precisely to contrasimilarity. This view provides further evidence why contrasimilarity is a quite sensible way of generalizing bisimilarity to systems with internal steps as discussed in Section~\ref{sec:contrasim}.

Contrasimilarity is an only slightly coarser sibling of coupled similarity \cite{parrow1994complete}. For coupled similarity, our group has been able to prove that it can be decided in cubic time, and cannot be cheaper than deciding weak similarity \cite{bn2019coupledsimTacas}. The exponential game of the present paper induces an exponential algorithm for deciding contrasimilarity. We have integrated the contrasimilarity algorithm into our prototypical coupled similarity checker on \url{https://coupledsim.bbisping.de}.
So, a side effect of the work presented here is to provide the first tool support for checking contrasimilarity.

It seems strange that deciding contrasimilarity, which coincides with coupled similarity in many cases, should be so much more expensive. We have not yet thought of a decisive argument for the complexity of contrasimilarity checking. Due to the results for the spectrum without internal steps~\cite{hs1996complexityEquivalences}, one may presume the coarser siblings of contrasimilarity from weak impossible futures down to weak trace equivalence to all be PSPACE-hard. But while deciding weak possible futures and nestings of possible futures equivalence is PSPACE, deciding its arbitrarily nested limit, weak bisimilarity, is PTIME. So how come that the arbitrarily nested limit of impossible futures equivalence, contrasimilarity, seems to be beyond PTIME?

The possibility of fooling depth-restricted approximations of contrasimilarity using instable external choice points as in Example~\ref{exa:instable-choice} suggests that one indeed must cater for the complexity of handling words. We would very much like to be proven wrong with regard to this. Also, not every formalism is expressive enough to raise this issue. In order to avoid the problem, it might suffice to ensure that all observations are committed, which would rule out Example~\ref{exa:instable-choice}. Fournet and Gonthier~\cite{fg2005hierarchy} showed that parts of the hierarchy of equivalences around contrasimilarity collapse for reduction semantics if transient barbs are excluded. Then, however, cheaper algorithms for easier predicates such as coupled simulation should also be applicable right away. More research would be needed in order to pinpoint where exactly one cannot do without arbitrary-length word contrasimulation.

For this, and all the other nice things one could do using contrasimilarity, everyone is welcome to use our Isabelle/HOL formalization.\footnote{Available from \url{https://github.com/luisamontanari/ContrasimGame}.}

{\small\medskip\noindent{\bf Acknowledgments.} We are thankful to the members of the TU Berlin research group Modelle und Theorie Verteilter Systeme, especially Uwe Nestmann, for supporting us in the preparation of this paper. Many thanks also to the EXPRESS/SOS reviewers!}

\bibliographystyle{eptcs}
\bibliography{sources}

\end{document}